\newtheorem{theorem}{Theorem}[section]
\newtheorem{corollary}{Corollary}[section]
\newtheorem{proposition}{Proposition}[section]
\newtheorem{lemma}{Lemma}[section]
\theoremstyle{remark}
\newtheorem*{remark}{Remark}
\theoremstyle{definition}
\newtheorem{definition}{Definition}[section]
\definecolor{brickred}{cmyk}{0,0.89,0.94,0.28}
\definecolor{goldenrod}{cmyk}{0,0.10,0.84,0}
\definecolor{purple}{cmyk}{0.45,0.86,0,0}
\definecolor{rawsienna}{cmyk}{0,0.72,1,0.45}
\definecolor{olivegreen}{cmyk}{0.64,0,0.95,0.40}
\definecolor{peach}{cmyk}{0,0.5,0.7,0}
\definecolor{darkolive}{rgb}{0.,0.4,0.}
\colorlet{grey}{gray!40}
\DeclareMathOperator{\argmin}{\text{argmin}}
\global\long\def\E{\mathbb{E}}
\begin{document}
\title{Optimal Tree-Based Mechanisms for Differentially Private Approximate CDFs}

\author{\IEEEauthorblockN{
		V. Arvind Rameshwar,
		Anshoo Tandon,
		and
		Abhay Sharma
	}
	\thanks{The authors are with the India Urban Data Exchange Program Unit, Indian Institute of Science, Bengaluru, India, emails: \texttt{\{arvind.rameshwar, anshoo.tandon\}@gmail.com,  abhay.sharma@datakaveri.org}.}
}
\IEEEoverridecommandlockouts
\maketitle

\begin{abstract}

This paper considers the $\varepsilon$-differentially private (DP) release of an approximate cumulative distribution function (CDF) of the samples in a dataset. We assume that the true (approximate) CDF is obtained after lumping the data samples into a fixed number $K$ of bins. In this work, we extend the well-known binary tree mechanism to the class of \emph{level-uniform tree-based} mechanisms and identify $\varepsilon$-DP mechanisms that have a small $\ell_2$-error. We identify optimal or close-to-optimal tree structures when either of the parameters, which are the branching factors or the privacy budgets at each tree level, are given, and when the algorithm designer is free to choose both sets of parameters. Interestingly, when we allow the branching factors to take on real values, under certain mild restrictions, the optimal level-uniform tree-based mechanism is obtained by choosing equal branching factors \emph{independent} of $K$, and equal privacy budgets at all levels. Furthermore, for selected $K$ values, we explicitly identify the optimal \emph{integer} branching factors and tree height, assuming equal privacy budgets at all levels. Finally, we describe {general} strategies for improving the private CDF estimates further, by combining multiple noisy estimates and by post-processing the estimates for consistency.


\end{abstract}


%
\IEEEpeerreviewmaketitle

\section{Introduction}
It is now well-understood that the release of even seemingly innocuous functions of a dataset that is not publicly available can result in the reconstruction of the identities of individuals (or users) in the dataset with alarming levels of accuracy (see, e.g.,  \cite{narayanan, sweeney}). To alleviate concerns over such attacks, the framework of differential privacy (DP) was introduced in \cite{dwork06}, which guarantees the privacy of any single sample. Subsequently, several works (see the surveys \cite{dworkroth, vadhan2017} for references) have sought to design DP mechanisms or algorithms for the provably private release of statistics such as the mean, variance, counts, and histograms, resulting in the widespread adoption of DP for private data mining and analysis \cite{appledp,googledp}. 

In this work, we consider the fundamental problem of the DP release of (approximate) cumulative distribution functions (CDFs), via the release of a vector of suitably normalized, cumulative counts of data samples in bins of uniform width. We assume that the number of bins (or, equivalently, the length of the query output) is fixed to be some constant $K$. Such a setting has been well-studied in the context of the private release of histograms and interval query answers (see, e.g., \cite[Sec. 2.3]{vadhan2017} and the works \cite{barak,treedwork,treeshi,dworkrectangle}). While it is easy to extend the mechanisms constructed for private histogram release to the setting of (approximate) CDFs, there have been only few works that seek to optimize the parameters of such mechanisms to achieve low errors. In particular, the works \cite{qardaji, cormode} consider variants of the well-known binary tree mechanism and suggest choices of the tree branching factor that achieve low errors using somewhat unnatural error metrics and asymptotic analysis. On the other hand, in the context of continual counting, given a fixed choice of parameters of (variants of) the binary tree mechanism, the works \cite{honaker,pagh} suggest techniques to optimally process the information in the nodes of the tree and use multiple noisy estimates of the same counts to obtain low-variance estimates of interval queries. We mention also that there have been several works (see, e.g., \cite{matrix1,matrix2,matrix3} and references therein) on matrix factorization-based mechanisms that result in the overall optimal error for general ``linear queries'' (see \cite[Sec. 1.5]{vadhan2017} for the definition). In this work, we concentrate on the class of tree-based mechanisms and seek to optimize their parameters for low $\ell_2$-error.

We first revisit some simple mechanisms for differentially private CDF release, via direct interval queries or histogram-based approaches, and explicitly characterize their $\ell_2$-errors. While such results are well-known (see, e.g., \cite{treeshi}), they allow for comparisons with the errors of the broad class of ``level-uniform tree-based" mechanisms -- a class that we define in this work -- that subsumes the binary tree mechanism and its previously studied variants \cite{qardaji, cormode, pagh}. First, by relaxing the integer constraint on the branching factors of the tree, we identify the optimal mechanism within this class, which turns out to be a simple tree-based mechanism with equal branching factors and privacy budgets at all levels. Furthermore, for sufficiently large $K$, the optimal branching factor, under a mild restriction on the branching factors, is a \emph{constant} -- roughly $17$. We mention that, interestingly, \cite{qardaji} reports the optimal branching factor in a \emph{subclass} of mechanisms as $16$, but for a different error metric, and using somewhat imprecise analysis. Next, we reimpose the integer constraint on branching factors and identify the optimal tree structure when $K$ is a power of a prime, assuming that the privacy budgets at all levels of the tree are equal. We obtain this result as a by-product of more general results on the optimal heights of trees, for a broader class of $K$ values. Finally, we argue that for this setting where the branching factors are constrained to be integers, for selected values of $K$, {for tree-based mechanisms with equal privacy budgets across all levels,} those mechanisms with equal branching factors at all levels are in fact strictly sub-optimal.

We then describe a strategy for optimally post-processing noisy estimates of a CDF, in such a manner that the output is ``consistent'' \cite{barak,histhay,histml}, i.e., so that the output respects standard properties of a CDF. While post-processing for consistency allows for easy manipulations by downstream applications, we show that it has the additional benefit of resulting in decreased errors. Our solution makes use of a simple, dynamic programming-based approach for solving the consistency problem. In the appendix, we also revisit a well-known approach \cite{honaker} for refining private tree-based estimates of statistics by combining several noisy estimates together, in the context of private CDF release and discuss some analytical results.

Our work is of immediate use to theorists and practitioners alike, via its explicit identifications of easy-to-implement mechanisms and algorithms, for minimizing the $\ell_2$-error. 

\section{Notation and Preliminaries}
\label{sec:prelim}
\subsection{Notation}
The notation $\mathbb{N}$ denotes the set of positive natural numbers. For a given integer $n\in \mathbb{N}$, the notation $[n]$ denotes the set $\{1,2,\ldots,n\}$. An empty product is defined to be $1$. We use the notation $\mathbf{b}$ to denote a vector of predefined length, all of whose symbols are equal to $b\in \mathbb{R}$. Given a real-valued vector $\mathbf{a} = (a_1,a_2,\ldots,a_n)$, we denote its $\ell_1$-norm by $\lVert \mathbf{a} \rVert_1:= \sum_{i=1}^n |a_i|$ and its $\ell_2$-norm by $\lVert \mathbf{a} \rVert_2:= \left(\sum_{i=1}^n a_i^2\right)^{1/2}$. 
We use the notation $\text{Lap}(b)$ to refer to the zero-mean Laplace distribution with standard deviation $\sqrt{2}b$
; its probability distribution function (p.d.f.) obeys
$$
f(x) = \frac{1}{2b}e^{-|x|/b}, \ x\in \mathbb{R}.
$$
Given a random variable $X$, we denote its variance by $\text{Var}(X)$. The notation ``$\log$'' denotes the natural logarithm.

\subsection{Problem Formulation}
Consider a dataset $\mathcal{D}$ consisting of $N$ samples $x_1,\ldots,x_N$, such that $x_i\in \mathbb{R}$, $i\in [N]$. We assume further that each data sample is bounded, i.e., $x_i\in [a,b)$, for some $a,b \in \mathbb{R}$, for all $i\in [N]$. Fix a number of bins $K\in \mathbb{N}$, and define the $j^{\text{th}}$ bin $B_j$ to be the interval $\left[a+\frac{(j-1)(b-a)}{K},a+\frac{j(b-a)}{K}\right)$, for $1\leq j\leq K$. Now, for $j\in [K]$, let $c_j = c_j(\mathcal{D})$ denote the number of data samples $x_i$ in bin $B_j$, i.e.,
\[
c_j:= \left\lvert \{i:\ x_i\in B_j\}\right \rvert.
\]
Note that the vector $\mathbf{c} = \mathbf{c}(\mathcal{D}):= (c_1,\ldots,c_K)$ is a histogram corresponding to the dataset $\mathcal{D}$, with uniform bin width $1/K$. Let $\overline{c}_j = \overline{c}_j(\mathcal{D}):= \sum_{r\leq j} c_r$ denote the cumulative number of data samples in bins $B_1$ through $B_j$, for $j\in [K]$, with $\mathbf{\overline{c}}:= (\overline{c}_1,\ldots,\overline{c}_K)$. We are interested in the quantity
\begin{equation}
	\label{eq:cdf}
\mathbf{F} = \mathbf{F}(\mathcal{D}) = \frac{1}{N}\cdot \mathbf{\overline{c}},
\end{equation}
which can be interpreted as an ``approximate'' cumulative distribution function (CDF) corresponding to $\mathcal{D}$, with uniform bin width $1/K$. We make the natural assumption that the total number of samples $N$ in the dataset $\mathcal{D}$ of interest is known; such an assumption is implicitly made in most settings that employ the ``swap-model'' definition of neighbouring datasets in Section \ref{sec:dp} \cite{dwork06, vadhan2017} (as opposed to an ``add-remove'' model \cite[Def. 2.4]{dworkroth}, \cite{atsaddremove}). In this paper, we focus on releasing the length-$K$ vector $\mathbf{F}$ in a differentially private (DP) manner. In the next subsection, we revisit the notion of $\varepsilon$-differential privacy, under the ``swap-model'' definition of neighbouring datasets.



\subsection{Differential Privacy}
\label{sec:dp}
Consider datasets $\mathcal{D}_1 = (y_1,y_2,\ldots,y_n)$ and $\mathcal{D}_2 = ({y}_1',{y}_2',\ldots,{y}_n')$ consisting of the same number of data samples. Let $\mathsf{D}$ denote a universal set of such datasets. We say that $\mathcal{D}_1$ and $\mathcal{D}_2$ are ``neighbouring datasets'' if there exists $i\in [n]$ such that $y_i\neq {y}_i'$, with $y_j = {y}_j'$, for all $j\neq i$. In this work, we concentrate on mechanisms (or algorithms) that map a given dataset to a vector in $\mathbb{R}^K$, for some fixed $K$.

\begin{definition}
	For a fixed $\varepsilon>0$, a mechanism $M: \mathsf{D}\to \mathbb{R}^K$ is said to be $\varepsilon$-DP if for every pair of neighbouring datasets $\mathcal{D}_1, \mathcal{D}_2$ and for every measurable subset $Y \subseteq \mathbb{R}^K$, we have 
$
	\Pr[M(\mathcal{D}_1) \in Y] \leq e^\varepsilon\cdot \Pr[M(\mathcal{D}_2) \in Y].
$
\end{definition}
For any mechanism $M_g: \mathcal{D}\to \mathbb{R}^K$ that seeks to release a DP estimate of the function $g:\mathcal{D}\to \mathbb{R}^K$, we define the following error metrics.
\begin{definition}
	The $\ell_1$-error of $M_g$ on dataset $\mathcal{D}$ is defined to be $E_1(M_g):= \E\left[\left \lVert M_g(\mathcal{D}) - g(\mathcal{D})\right \rVert_1\right]$. Likewise, the squared $\ell_2$-error of $M_g$ on dataset $\mathcal{D}$ is defined to be $E_2(M_g):= \E\left[\left \lVert M_g(\mathcal{D}) - g(\mathcal{D})\right \rVert_2^2\right]$. In both cases, the expectation is over the randomness introduced by $M_g$.
\end{definition}
We remark that if $M_g$ is an unbiased estimator of $g$, the {expected} squared $\ell_{2}$-error of $M_g$ is precisely the variance of $M_g$. Next, we recall the definition of the sensitivity of a function.
\begin{definition}
	Given a function $g: \mathsf{D}\to \mathbb{R}^K$, we define its  sensitivity $\Delta_g$ as
$
	\Delta_g:= \max_{\mathcal{D}_1,\mathcal{D}_2\ \text{nbrs.}} \left \lVert g(\mathcal{D}_1) - g(\mathcal{D}_2)\right\rVert_1,
$
	where the maximization is over neighbouring datasets.
\end{definition}
{For example, the sensitivity of the function (or vector) $\mathbf{c}$ is $2$}, since changing a single data sample can, in the worst case, decrease the count $c_j$ by $1$ and increase $c_{j^\prime}$ by $1$, where $j'\neq j$ (see, e.g., \cite[Eg. 3.2]{dworkroth}). Furthermore, the sensitivity of $\overline{\mathbf{c}}$ is $K-1$, which is achieved when the value of a single sample in bin $B_1$ changes to a value that lies in bin $B_K$. Therefore, from \eqref{eq:cdf}, we have $$\Delta_{\mathbf{F}} = \frac{K-1}{N}.$$
 The next result is well-known \cite[Prop. 1]{dwork06}:

\begin{theorem}
	\label{thm:dp}
	For any $g: \mathsf{D}\to \mathbb{R}^K$, the mechanism $M^{\text{Lap}}_g: \mathsf{D}\to \mathbb{R}^K$ defined by
$$
	M^{\text{Lap}}_g(\mathcal{D}) = g(\mathcal{D})+\mathbf{Z},
$$
	where $\mathbf{Z} = (Z_1,\ldots,Z_K)$, with $Z_i\stackrel{\text{i.i.d.}}{\sim} \text{Lap}(\Delta_g/\varepsilon)$ is $\varepsilon$-DP.
\end{theorem}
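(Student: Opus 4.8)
The plan is to argue directly at the level of probability density functions, which is painless here because the output distribution of $M^{\text{Lap}}_g(\mathcal{D})$ has a density on $\mathbb{R}^K$ that is strictly positive everywhere. First I would write down, for an arbitrary dataset $\mathcal{D}$ and an arbitrary point $\mathbf{y} = (y_1,\ldots,y_K) \in \mathbb{R}^K$, the joint density of $M^{\text{Lap}}_g(\mathcal{D})$ as the product of its $K$ independent $\text{Lap}(\Delta_g/\varepsilon)$ coordinates, namely
$$
p_{\mathcal{D}}(\mathbf{y}) = \prod_{i=1}^{K} \frac{\varepsilon}{2\Delta_g}\exp\!\left(-\frac{\varepsilon\,\bigl|y_i - g(\mathcal{D})_i\bigr|}{\Delta_g}\right),
$$
where $g(\mathcal{D})_i$ denotes the $i$-th coordinate of $g(\mathcal{D})$, using the form of the Laplace p.d.f.\ recalled in Section~\ref{sec:prelim}.

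Next, I would fix a pair of neighbouring datasets $\mathcal{D}_1, \mathcal{D}_2$ and form the pointwise ratio $p_{\mathcal{D}_1}(\mathbf{y})/p_{\mathcal{D}_2}(\mathbf{y})$. The normalizing constants $\varepsilon/(2\Delta_g)$ cancel, and the ratio collapses to
$$
\frac{p_{\mathcal{D}_1}(\mathbf{y})}{p_{\mathcal{D}_2}(\mathbf{y})} = \exp\!\left(\frac{\varepsilon}{\Delta_g}\sum_{i=1}^{K}\Bigl(\bigl|y_i - g(\mathcal{D}_2)_i\bigr| - \bigl|y_i - g(\mathcal{D}_1)_i\bigr|\Bigr)\right).
$$
The key step is to bound the exponent. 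By the reverse triangle inequality, each summand obeys $\bigl|y_i - g(\mathcal{D}_2)_i\bigr| - \bigl|y_i - g(\mathcal{D}_1)_i\bigr| \leq \bigl|g(\mathcal{D}_1)_i - g(\mathcal{D}_2)_i\bigr|$, so the sum is at most $\lVert g(\mathcal{D}_1) - g(\mathcal{D}_2)\rVert_1$, which, since $\mathcal{D}_1$ and $\mathcal{D}_2$ are neighbours, is at most $\Delta_g$ by the definition of sensitivity. Hence the exponent is at most $\varepsilon$, giving $p_{\mathcal{D}_1}(\mathbf{y}) \leq e^{\varepsilon}\, p_{\mathcal{D}_2}(\mathbf{y})$ for every $\mathbf{y} \in \mathbb{R}^K$.

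Finally, I would integrate this pointwise inequality over an arbitrary measurable set $Y \subseteq \mathbb{R}^K$ to conclude
$$
\Pr[M^{\text{Lap}}_g(\mathcal{D}_1) \in Y] = \int_Y p_{\mathcal{D}_1}(\mathbf{y})\,\d\mathbf{y} \leq e^{\varepsilon}\int_Y p_{\mathcal{D}_2}(\mathbf{y})\,\d\mathbf{y} = e^{\varepsilon}\Pr[M^{\text{Lap}}_g(\mathcal{D}_2) \in Y],
$$
which is exactly the $\varepsilon$-DP condition. There is no substantial obstacle in this argument; the only points requiring care are that $\Delta_g$, defined via the worst-case $\ell_1$ distance between $g$-values on neighbours, is precisely the quantity that makes the per-coordinate triangle-inequality bounds sum to something controlled by $\varepsilon$, and that the strict positivity of the Laplace density lets the density-ratio comparison be taken pointwise everywhere, with no exceptional null sets to worry about.
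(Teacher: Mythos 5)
Your argument is correct and is precisely the standard density-ratio proof of the Laplace mechanism from the cited reference \cite{dwork06} (the paper itself states Theorem~\ref{thm:dp} without proof). The factorization of the joint density, the reverse triangle inequality bounding the exponent by $\varepsilon\lVert g(\mathcal{D}_1)-g(\mathcal{D}_2)\rVert_1/\Delta_g \leq \varepsilon$, and the integration over $Y$ are all exactly as they should be.
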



\section{Simple Mechanisms for DP CDF Release}
In this section, we discuss two simple mechanisms that can be employed for releasing an estimate of $\mathbf{F}$ in an $\varepsilon$-DP manner. We then explicitly compute the variances (equivalently, the squared $\ell_2$-errors) of the (unbiased) estimators constructed by these mechanisms. 
While the results in this section are well-known (see, e.g., \cite[Sec. 3.1]{treeshi}), we restate them to serve as motivation for our discussion on the general class of tree-based mechanisms (defined in Section \ref{sec:tree}) and our subsequent characterization of the optimal mechanism within this broad class.

\subsection{DP CDFs Via DP Range Queries}
Fix an $\varepsilon>0$. Our first mechanism constructs an $\varepsilon$-DP estimate of $\mathbf{F}(\mathcal{D})$ by first constructing an $\varepsilon$-DP estimate of $\mathbf{\overline{c}}$, via an estimate of the vector $\overline{\mathbf{c}}$ defined earlier.  Queries for values of the form $\overline{c}_j$, for some $j\in [K]$, are also called as ``range'' or ``threshold'' queries (see, e.g., \cite[Sec. 2.1, Lec. 7]{privstat}). In particular, let the length-$K$ vector $M_{\overline{\textbf{c}}}^{\text{Lap}}(\mathcal{D}) = \left(M_{\overline{\textbf{c}},j}^{\text{Lap}}(\mathcal{D})\right)_{1\leq j\leq K}$ be such that
\[
M_{\overline{\textbf{c}},K}^{\text{Lap}}(\mathcal{D}) = N
\]
and
 $$M_{\overline{\textbf{c}},j}^{\text{Lap}}(\mathcal{D}) = \overline{c}_j+{Z}_j,$$ where {$1 \leq j \le K-1$, and} $Z_j\stackrel{\text{i.i.d.}}{\sim} \text{Lap}\left(\frac{K-1}{\varepsilon}\right)$. The mechanism then returns $\mathbf{F}^{\text{ind}}(\mathcal{D}) = \frac{1}{N}\cdot M_{\overline{\textbf{c}}}^{\text{Lap}}(\mathcal{D})$. It is easy to verify that $\mathbf{F}^\text{ind}(\mathcal{D})$ is an unbiased estimator of $\mathbf{F}(\mathcal{D})$. Furthermore, the following simple lemma holds:
\begin{lemma}
	\label{lem:ind}
	The mechanism $\mathbf{F}^\text{\normalfont ind}$ is $\varepsilon$-DP, and for all $\mathcal{D}$,
	\[
	E_2\left(\mathbf{F}^\text{\normalfont ind}(\mathcal{D})\right) = \frac{2(K-1)^3}{N^2\varepsilon^2}.
	\]
\end{lemma}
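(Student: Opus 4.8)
The plan is to treat the two assertions separately, since both are short.

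\textbf{Privacy.} I would observe that $\mathbf{F}^{\text{ind}}$ is obtained from the Laplace mechanism of Theorem~\ref{thm:dp}, applied to the truncated query $h(\mathcal{D}) := (\overline{c}_1,\ldots,\overline{c}_{K-1})$, followed by the two data-independent post-processing steps of appending the constant entry $N$ and multiplying the resulting length-$K$ vector by $1/N$. Since post-processing preserves $\varepsilon$-DP, it suffices to verify that the Laplace mechanism applied to $h$ with noise scale $(K-1)/\varepsilon$ is $\varepsilon$-DP, which by Theorem~\ref{thm:dp} reduces to checking $\Delta_h = K-1$. For this I would reuse the computation already recorded for $\overline{\mathbf{c}}$: for neighbouring $\mathcal{D}_1,\mathcal{D}_2$ a single sample moves between two distinct bins $B_i$ and $B_{i'}$, say with $i<i'$, which perturbs $\overline{c}_r$ by $1$ precisely for $r\in\{i,\ldots,i'-1\}$ and leaves all other coordinates fixed; hence $\lVert h(\mathcal{D}_1)-h(\mathcal{D}_2)\rVert_1 = i'-i\le K-1$, with equality at $i=1$, $i'=K$. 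The worst case never touches the coordinate $\overline{c}_K$, which is exactly why pinning the last coordinate to $N$ and injecting noise only into the first $K-1$ coordinates loses nothing, and unbiasedness is immediate since the injected noise is zero-mean and the last coordinate equals $\overline{c}_K = N$ deterministically.

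\textbf{Error.} I would write $\mathbf{F}^{\text{ind}}(\mathcal{D}) - \mathbf{F}(\mathcal{D}) = \tfrac{1}{N}(Z_1,\ldots,Z_{K-1},0)$ with $Z_j\stackrel{\text{i.i.d.}}{\sim}\text{Lap}((K-1)/\varepsilon)$, so that
\[
E_2\!\left(\mathbf{F}^{\text{ind}}(\mathcal{D})\right) = \frac{1}{N^2}\sum_{j=1}^{K-1}\E\!\left[Z_j^2\right] = \frac{K-1}{N^2}\,\V(Z_1),
\]
using $\E[Z_j]=0$. With the convention fixed in Section~\ref{sec:prelim}, $\text{Lap}(b)$ has variance $2b^2$, so $\V(Z_1) = 2(K-1)^2/\varepsilon^2$, and substituting yields $E_2\!\left(\mathbf{F}^{\text{ind}}(\mathcal{D})\right) = 2(K-1)^3/(N^2\varepsilon^2)$.

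\textbf{Main obstacle.} There is no serious obstacle here; the only point that deserves a line of care is the sensitivity bookkeeping — confirming that restricting $\overline{\mathbf{c}}$ to its first $K-1$ coordinates leaves the sensitivity unchanged (the extremal bin move, from $B_1$ to $B_K$, does not disturb $\overline{c}_K$), and that appending the deterministic entry $N$ together with the global $1/N$ scaling are legitimate post-processing maps, so that both $\varepsilon$-DP and unbiasedness are preserved.
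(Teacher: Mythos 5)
Your proof is correct and follows essentially the same route as the paper: $\varepsilon$-DP via the Laplace mechanism on the range queries (sensitivity $K-1$) plus post-processing, and the error as the sum of $K-1$ i.i.d. Laplace variances $2(K-1)^2/\varepsilon^2$ scaled by $1/N^2$. The only difference is that you spell out the sensitivity bookkeeping for the truncated query inside the proof, whereas the paper records the sensitivity of $\overline{\mathbf{c}}$ beforehand and simply cites it.
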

\begin{proof}
	The fact that $\mathbf{F}^\text{ind}$ is $\varepsilon$-DP follows from Theorem \ref{thm:dp} and the fact that (arbitrary) post-processing of an $\varepsilon$-DP mechanism is still $\varepsilon$-DP \cite[Prop. 2.1]{dworkroth}. Here, we make use of our observation that the sensitivity of $\overline{\mathbf{c}} = K-1$. Observe that \begin{align*}E_2\left(\mathbf{F}^\text{ind}(\mathcal{D})\right) &= \frac{1}{N^2}\cdot\sum_{j=1}^{K-1} E_2\left(\left(M_{\overline{\textbf{c}}}^{\text{Lap}}(\mathcal{D})\right)_j\right)\\ &= \frac{(K-1)\cdot \text{Var}(Z_j)}{N^2},\end{align*}
	by the unbiasedness of $M_{\overline{c}_j}^{\text{Lap}}(\mathcal{D})$. Noting that $\text{Var}(Z_j) = 2(K-1)^2/{\varepsilon}^2$, we obtain the statement of the lemma.
\end{proof}
\subsection{DP CDFs Via Histogram Queries} \label{sec:cdf_via_histogram}
Fix an $\varepsilon>0$. Our second mechanism constructs an $\varepsilon$-DP estimate of the histogram $\mathbf{c}$ and then processes this estimate to obtain an estimate of the CDF $\mathbf{F}$. In particular, let $M_{\mathbf{c}}^\text{Lap}(\mathcal{D}) = \left(M_{\mathbf{c},i}^\text{Lap}(\mathcal{D})\right)_{1\leq i\leq K-1}$ be a length-$(K-1)$ vector such that
\begin{equation}
\label{eq:mchihist}
M_{\mathbf{c},i}^\text{Lap}(\mathcal{D}) = {c}_i+{Z}_i^\prime,
\end{equation}
{for $1 \le i \le K-1$, with} $Z^\prime_i\stackrel{\text{i.i.d.}}{\sim} \text{Lap}(2/\varepsilon)$. 
The mechanism then returns $$\mathbf{F}^\text{hist}(\mathcal{D}) = \frac{1}{N}\cdot \overline{M}_{\mathbf{c}}^\text{Lap}(\mathcal{D}),$$
where $\overline{M}_{\mathbf{c}}^\text{Lap}(\mathcal{D})$ is a length-$K$ vector, with $$\overline{M}_{\mathbf{c},i}^\text{Lap}(\mathcal{D}) = \sum_{j\leq i} M_{\mathbf{c},i}^\text{Lap}(\mathcal{D})  = \overline{c}_i+\sum_{j\leq i} {Z^\prime_j},$$
for $1\le i\le K-1$, and $\overline{M}_{\mathbf{c},{\color{blue}K}}^\text{Lap}(\mathcal{D}):= N$.
 Again, one can verify that $\mathbf{F}^\text{hist}(\mathcal{D})$ is an unbiased estimator of $\mathbf{F}(\mathcal{D})$, with squared $\ell_2$-error as given in the following lemma:
\begin{lemma}
	\label{lem:varhist}
	The mechanism $\mathbf{F}^\text{\normalfont hist}$ is $\varepsilon$-DP, and for all $\mathcal{D}$,
	\[
	E_2\left(\mathbf{F}^\text{\normalfont hist}(\mathcal{D})\right) = \frac{4K(K-1)}{N^2\varepsilon^2}.
	\]
\end{lemma}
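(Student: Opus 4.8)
The plan is to follow exactly the template used for Lemma \ref{lem:ind}: first dispatch the privacy claim using Theorem \ref{thm:dp} together with the post-processing invariance of differential privacy, and then compute the squared $\ell_2$-error directly, using unbiasedness and the independence of the injected Laplace noise.

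For the privacy part, I would first record that the truncated histogram $(c_1,\ldots,c_{K-1})$ has sensitivity $2$: modifying a single sample moves it from one bin to another, changing at most two of the retained counts by $1$ each, so the $\ell_1$-distance between these sub-vectors on neighbouring datasets is at most $2$. Hence Theorem \ref{thm:dp}, applied with $\Delta = 2$, shows that the vector $M_{\mathbf{c}}^\text{Lap}(\mathcal{D})$ of \eqref{eq:mchihist} is $\varepsilon$-DP. Since $\mathbf{F}^\text{hist}(\mathcal{D})$ is obtained from $M_{\mathbf{c}}^\text{Lap}(\mathcal{D})$ by the deterministic operations of forming prefix sums, appending the publicly known constant $N$, and scaling by $1/N$, all of which are post-processing, the mechanism $\mathbf{F}^\text{hist}$ is $\varepsilon$-DP by \cite[Prop. 2.1]{dworkroth}.

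For the error, I would observe that the $i$-th coordinate of $\mathbf{F}^\text{hist}(\mathcal{D}) - \mathbf{F}(\mathcal{D})$ equals $\frac{1}{N}\sum_{j\le i} Z'_j$ for $1\le i\le K-1$ and equals $0$ for $i=K$ (since $\overline{M}_{\mathbf{c},K}^\text{Lap}(\mathcal{D}) = N = \overline{c}_K$). Therefore
\[
E_2\left(\mathbf{F}^\text{hist}(\mathcal{D})\right) = \frac{1}{N^2}\sum_{i=1}^{K-1}\E\left[\Bigl(\sum_{j\le i} Z'_j\Bigr)^{2}\right].
\]
Because the $Z'_j$ are i.i.d., zero-mean, with $\text{Var}(Z'_j) = 2\cdot(2/\varepsilon)^2 = 8/\varepsilon^2$, the inner expectation is $8i/\varepsilon^2$; summing via $\sum_{i=1}^{K-1} i = K(K-1)/2$ gives $E_2(\mathbf{F}^\text{hist}(\mathcal{D})) = 4K(K-1)/(N^2\varepsilon^2)$, as claimed. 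I do not expect any real obstacle; the only points needing care are the sensitivity bookkeeping for the \emph{truncated} histogram (verifying that dropping the $K$-th count does not inflate the sensitivity) and the observation that, although the prefix-sum structure correlates the noise across coordinates, the coordinate-wise second moments are all that enters the computation, so each term can be evaluated separately.
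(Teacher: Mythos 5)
Your proposal is correct and follows essentially the same route as the paper's proof: privacy via the sensitivity-$2$ Laplace mechanism plus post-processing, and the error via the coordinate-wise second moments $\E[(\sum_{j\le i} Z'_j)^2] = 8i/\varepsilon^2$ summed over $i = 1,\ldots,K-1$. The extra care you take with the sensitivity of the truncated histogram and with the noiseless $K$-th coordinate is consistent with (and slightly more explicit than) the paper's argument.
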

\begin{proof}
	The fact that $\mathbf{F}^\text{hist}$ is $\varepsilon$-DP follows since $M_{\mathbf{c}}^\text{Lap}$ is $\varepsilon$-DP and by the fact that (arbitrary) post-processing of an $\varepsilon$-DP mechanism is still $\varepsilon$-DP \cite[Prop. 2.1]{dworkroth}. Further, note that
	\begin{align*}
		E_2\left(\mathbf{F}^\text{hist}(\mathcal{D})\right)&= \frac{1}{N^2}\cdot\sum_{i=1}^{K-1} \E\left[\left(\sum_{j\leq i} Z_j^\prime\right)^2\right]\\
		&= \frac{1}{N^2}\cdot\sum_{i=1}^{K-1} \sum_{j\leq i} \E\left[\left(Z_j^\prime\right)^2\right]\\
		&= \frac{8}{N^2\varepsilon^2}\cdot \sum_{i=1}^{K-1} i = \frac{4K(K-1)}{N^2\varepsilon^2}.
	\end{align*}
The second inequality above holds since the $Z_j^\prime$, $j\in [K]$, are i.i.d. zero-mean random variables.
\end{proof}
From Lemmas \ref{lem:ind} and \ref{lem:varhist}, it is clear that for a fixed number $N$ of data samples, the squared $\ell_{2}$-error of $\mathbf{F}^\text{hist}$ is less than that of $\mathbf{F}^\text{ind}$, for $K \geq 4$. 

In the next section, we define another class of mechanisms, which we call the class of ``level-uniform tree-based" mechanisms, whose squared $\ell_{2}$-error is much lower than $\mathbf{F}^\text{ind}$ and $\mathbf{F}^\text{hist}$. This class of mechanisms includes the well-known binary tree mechanism \cite{treedwork,treeshi} as a special instance. However, as we show, the binary tree mechanism is not necessarily the optimal mechanism within the class of level-uniform tree-based mechanisms, in terms of {the} squared $\ell_2$-error, for the release of DP CDFs. We proceed to then explicitly identify the optimal mechanism, with some additional conditions in place, via a simple optimization procedure.
\section{Tree-Based DP CDFs Under the $\ell_{2}$-Error}
\label{sec:tree}
We now define the class of level-uniform tree-based mechanisms. In order to do so, we will require more notation. 

Fix a positive integer $m\geq 2$  and  positive integers $n_1,\ldots,n_{m-1}\geq 1$ such that $\prod_{i=1}^{m-1} n_i \geq K$. Now, consider a rooted tree $\mathcal{T}_m$ with $1+\sum_{i=1}^{m-1} \prod_{j=1}^i n_j$ nodes, which is organized into levels as follows. The topmost level, also called level $0$, consists of only the root node, designated as ${v}_1$. Each node $v_{1,j_1,\ldots,j_i}$ in level $i$, where $1\leq j_r\leq n_r$ (we set $j_0=n_0=1$), for $1\leq r \leq i$ and $1\leq i\leq m-1$, has $n_i$ children nodes at level $i+1$, which are designated as $v_{1,j_1,\ldots,j_i,1}$ through $v_{1,j_1,\ldots,j_i,n_{i}}$. The nodes at level $m-1$ are the leaf nodes of $\mathcal{T}_m$, which have no children. We call $m-1$ as the height of tree $\mathcal{T}_m$.

Now, given such a rooted tree $\mathcal{T}_m$, we associate with the root node ${v}_1$ the interval $[a,b)$. Given an interval $[\ell,L]$ associated with a node $v_{1,j_1,\ldots,j_i}$ in level $i$, with $0\leq i\leq m-1$, we associate with its child node $v_{1,j_1,\ldots,j_i,j_{i+1}}$ the interval $\left[\ell+\frac{{(L-\ell)}(j_{i+1}-1)}{n_i}, \ell+\frac{{(L-\ell)}j_{i+1}}{n_i}\right)$, where $1\leq j_{i+1}\leq n_i$. For ease of notation, we write $\mathcal{I}(v_{1,j_1,\ldots,j_i})$ to denote the interval associated with node $v_{1,j_1,\ldots,j_i}$ and $\chi(v_{1,j_1,\ldots,j_i})$ to denote the number of samples $x_r$ in the interval $\mathcal{I}(v_{1,j_1,\ldots,j_i})$, for $0\leq i\leq m-1$. Note that the intervals corresponding to leaf nodes  are of the form $\left[a+\frac{{(b-a)}(j-1)}{K},a+\frac{{(b-a)}j}{K}\right)$, for some $j\in [K]$. 

Finally, for every leaf node $v$ with $\mathcal{I}(v) = \left[a+\frac{{(b-a)}(j-1)}{K},a+\frac{{(b-a)}j}{K}\right)$, for some $j\in [K]$, we associate the ``cumulative'' interval $\overrightarrow{\mathcal{I}}(v):= \left[a,a+\frac{{(b-a)}j}{K}\right)$. Furthermore, for the leaf node $v$ with associated cumulative interval $\overrightarrow{\mathcal{I}}$, we define its so-called ``covering'' $$\mathcal{C}(\overrightarrow{\mathcal{I}}) = \mathcal{C}(v)$$ to be that collection obtained via the following procedure. 

We set $\mathcal{J} = \overrightarrow{\mathcal{I}}$ and $\mathcal{C}(v)$ {to be the empty set}. Starting from level $i=0$ (containing the root node), we identify the set $\mathcal{U}_i$ consisting of all nodes $u$ at level $i$ such that $\bigcup_{u\in \mathcal{U}_i} \mathcal{I}(u) \subseteq \mathcal{J}$, and update $\mathcal{C}(v) \gets \mathcal{C}(v)\, \bigcup \, \mathcal{U}_i$. We then set $\mathcal{J}\gets \mathcal{J}\setminus \bigcup_{u\in \mathcal{U}_i} \mathcal{I}(u)$ and $i\gets i+1$ and iterate until $\mathcal{J}$ is the empty set. The covering $\mathcal{C}(v)$ is hence the collection of nodes in $\mathcal{T}_m$ that ``covers'' $\overrightarrow{\mathcal{I}}$, starting at the root node and proceeding in a breadth-first fashion.

{Note that for the leaf node $v = v_{1,n_1,\ldots,n_{m-1}}$, we have  $\mathcal{C}(v) = v_1$.} Now, given a leaf node $v = v_{1,j_1,\ldots,j_{m-1}}$,   {with $v \neq v_{1,n_1,\ldots,n_{m-1}}$}, let $\tau = \tau(v)$ be the largest value $\ell\in [1:m-1]$ such that $j_\ell \neq n_\ell$. Further, for any $i\in [m-1]$, we write $v_{1,j_1,\ldots,j_{i-1},<j_{i}}$ as shorthand for the nodes $v_{1,j_1,\ldots,j_{i-1},1},\ldots, v_{1,j_1,\ldots,j_{i-1},j_i-1}$. The following simple lemma then holds, by virtue of our convenient notation:

\begin{lemma}
	\label{lem:tau}
	Let $\overrightarrow{\mathcal{I}}$ be a cumulative interval associated with a leaf node $v = v_{1,j_1,\ldots,j_{m-1}}$. If $\overrightarrow{\mathcal{I}} = [a,b)$, then $\mathcal{C}(\overrightarrow{\mathcal{I}}) = v_1$. Else, 
	\[
	\mathcal{C}(\overrightarrow{\mathcal{I}}) = \{v_{1,<j_1},v_{1,j_1,<j_2},\ldots,v_{1,j_1,j_2,\ldots,<j_{\tau}}, v_{1,j_1,j_2,\ldots,j_\tau}\},
	\]
	where $\tau = \tau(v)$.
\end{lemma}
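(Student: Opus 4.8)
The plan is to verify directly that the greedy, breadth-first covering procedure defining $\mathcal{C}(\overrightarrow{\mathcal{I}})$ produces exactly the claimed node set when $\overrightarrow{\mathcal{I}}\neq[a,b)$, and to observe that the case $\overrightarrow{\mathcal{I}}=[a,b)$ is immediate since then $\mathcal{U}_0=\{v_1\}$ and $\mathcal{J}$ becomes empty after the first iteration. For the main case, let $v=v_{1,j_1,\ldots,j_{m-1}}$ with $v\neq v_{1,n_1,\ldots,n_{m-1}}$ and $\tau=\tau(v)$, so that $\overrightarrow{\mathcal{I}} = \left[a,\, a+\tfrac{(b-a)j}{K}\right)$ for the appropriate $j\in[K]$. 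First I would record the key arithmetic fact that for each level $i$, the left endpoint $a$ together with the position $j$ of the leaf determines which nodes at level $i$ lie entirely inside $\overrightarrow{\mathcal{I}}$: writing the leaf index in the mixed-radix expansion induced by $n_1,\ldots,n_{m-1}$, the prefix $(j_1,\ldots,j_i)$ identifies the path from the root, and the nodes at level $i$ whose intervals are contained in $\overrightarrow{\mathcal{I}}$ but not already covered at earlier levels are precisely $v_{1,j_1,\ldots,j_{i-1},1},\ldots,v_{1,j_1,\ldots,j_{i-1},j_i-1}$, i.e. $v_{1,j_1,\ldots,j_{i-1},<j_i}$, for $1\le i\le \tau$, and also $v_{1,j_1,\ldots,j_\tau}$ itself at level $\tau$.

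The argument I would give proceeds by induction on the iteration counter $i=0,1,\ldots,\tau$, maintaining the invariant that at the start of iteration $i$ the remaining interval is $\mathcal{J}_i = \left[a+\tfrac{(b-a)j^{(i)}}{K_i},\, a+\tfrac{(b-a)j}{K}\right)$ where $K_i=\prod_{r=1}^{i}n_r$ and $j^{(i)}=\sum_{r=1}^{i}(j_r-1)\prod_{s=r+1}^{i}n_s$ is the ``flattened'' index of the node $v_{1,j_1,\ldots,j_i}$ among the level-$i$ nodes; equivalently $\mathcal{J}_i$ is the cumulative remainder starting just after the first $j^{(i)}$ level-$i$ intervals. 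Under this invariant: (i) at level $i<\tau$ the nodes at level $i$ contained in $\mathcal{J}_i$ but lying below $v_{1,j_1,\ldots,j_i}$ in the sibling order are exactly $v_{1,j_1,\ldots,j_{i-1},<j_i}$... wait, more carefully, the nodes at level $i$ whose intervals fit inside $\mathcal{J}_i$ are those that are children of $v_{1,j_1,\ldots,j_{i-1}}$ with index $<j_i$, giving $\mathcal{U}_i = \{v_{1,j_1,\ldots,j_{i-1},<j_i}\}$; removing their intervals from $\mathcal{J}_i$ leaves $\mathcal{J}_{i+1}$ of the stated form with $i+1$ in place of $i$, because we have ``descended'' into the child $v_{1,j_1,\ldots,j_i}$; (ii) at level $i=\tau$, since $j_{\tau+1}=\cdots=j_{m-1}=n_{m-1}$ (by maximality of $\tau$), the subtree rooted at $v_{1,j_1,\ldots,j_\tau}$ has all its leaf intervals contained in $\mathcal{J}_\tau$, so the whole node $v_{1,j_1,\ldots,j_\tau}$ is added and $\mathcal{J}$ becomes empty, terminating the procedure. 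Concatenating the $\mathcal{U}_i$ over $i=1,\ldots,\tau$ (note $\mathcal{U}_0=\emptyset$ since $\overrightarrow{\mathcal{I}}\neq[a,b)$ forces the root's interval not to fit) yields exactly $\{v_{1,<j_1},v_{1,j_1,<j_2},\ldots,v_{1,j_1,\ldots,j_{\tau-1},<j_\tau},v_{1,j_1,\ldots,j_\tau}\}$, as claimed.

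The step I expect to be the main (though still modest) obstacle is the bookkeeping in the inductive invariant: precisely tracking the left endpoint of the remaining interval $\mathcal{J}_i$ in terms of the mixed-radix digits and verifying that no node strictly below level $i$ and not a descendant of the current path can ever be contained in $\mathcal{J}_i$ at the time level $i$ is processed — i.e. that the greedy breadth-first rule genuinely picks up only siblings-to-the-left along the root-to-leaf path. This is a consequence of the nesting structure of the dyadic-type partition (each level-$i$ node's interval is a disjoint union of its children's intervals, and intervals at the same level are ordered and non-overlapping), so once the procedure has ``committed'' to being inside $v_{1,j_1,\ldots,j_i}$, any level-$(i+1)$ node outside this subtree is either fully covered already or disjoint from $\mathcal{J}_{i+1}$. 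I would state this nesting/monotonicity observation as a one-line sub-claim and then the rest is the routine induction sketched above.
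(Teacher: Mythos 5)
Your argument is correct and is essentially the routine verification that the paper leaves implicit (the lemma is stated without proof, as a direct consequence of the notation): unfolding the greedy breadth-first procedure, maintaining the invariant that after processing level $i$ the residual interval $\mathcal{J}$ is the left portion of $\mathcal{I}(v_{1,j_1,\ldots,j_i})$ ending at the right endpoint of $\overrightarrow{\mathcal{I}}$, and using disjointness of same-level intervals to conclude that only the left siblings along the root-to-leaf path are ever picked up, until level $\tau$ where maximality of $\tau$ makes the right endpoints coincide and the node $v_{1,j_1,\ldots,j_\tau}$ itself is absorbed. One trivial slip: by maximality of $\tau$ you should have $j_\ell = n_\ell$ for each $\ell>\tau$ (not all equal to $n_{m-1}$), but this does not affect the argument.
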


A (level-uniform) tree-based mechanism for DP CDF release constructs an estimate of the vector of counts $\mathbf{\overline{c}} = \mathbf{\overline{c}}(\mathcal{D})$ by first designing an unbiased estimator of the counts $\chi(v_{1,j_1,\ldots,j_i})$, for each node $v_{1,j_1,\ldots,j_i}$ in $\mathcal{T}_m$, where $0\leq i\leq m-1$ and $1\leq j_r\leq n_r$, for $1\leq r\leq i$. In particular, let $\varepsilon_1,\varepsilon_2,\ldots,\varepsilon_{m-1}$ be a collection of ``privacy budgets,'' with $\varepsilon_i$ assigned to the nodes at level $i$, $1\leq i\leq m-1$, with $\sum_{i=1}^{m-1}\varepsilon_i = \varepsilon$. Note here that we assign no privacy budget for the root node $v_1$, since its count $\chi(v_1) := N$ is assumed to be publicly known. Now, for each node $v = v_{1,j_1,\ldots,j_i}$ at level $i\in [m-1]$, we set
\begin{equation}
	\label{eq:mchi}
M^\text{Lap}_{\chi(v)}(\mathcal{D}) = \chi(v)+{Z}_{v},
\end{equation}
where ${Z}_v\stackrel{\text{i.i.d.}}{\sim} \text{Lap}\left(\frac{2}{{\varepsilon_i}}\right)$. We let $M^\text{Lap}_{\chi(v_1)}(\mathcal{D}) = N$. We then define
\begin{equation}
	\label{eq:mchilap}
	M^\text{Lap}_{\chi}(\mathcal{D}) := \left(M^\text{Lap}_{\chi(v)}(\mathcal{D}):\ v\ \text{is a node of } \mathcal{T}_m\right).\end{equation}
\begin{definition}
	\label{def:cdftree}
	For a fixed $\varepsilon>0$ and a positive integer $K$, consider a positive integer $m\geq 2$ (where $m-1$ represents the height of the tree) and a vector of positive integers $\mathbf{n}:=(n_1,\ldots,n_{m-1})$ (representing the branching factors at each level of the tree), with $n_i\geq 2$, for all $i\in [m-1]$, and a vector of positive real numbers $\boldsymbol{\varepsilon}:= (\varepsilon_1,\ldots,\varepsilon_{m-1})$, such that $\prod_{i=1}^{m-1} n_i \geq K$ and $\sum_{i=1}^{m-1}\varepsilon_i = \varepsilon$. We define  a level-uniform tree-based mechanism $\mathbf{F}^\text{tree}_{\mathbf{n},\boldsymbol{\varepsilon}}$ as one that obeys
	\[
	\mathbf{F}^\text{tree}_{\mathbf{n},\boldsymbol{\varepsilon}}(\mathcal{D}) = \frac{1}{N}\cdot \overline{M}^\text{Lap}_{\chi}(\mathcal{D}).
	\]
	Here, $\overline{M}^\text{Lap}_{\chi}$ is a length-$K$ vector indexed by leaf nodes $v = v_{1,j_1,\ldots,j_{m-1}}$ such that 
	\begin{equation}
		\label{eq:cdfcompute}
	\overline{M}^\text{Lap}_{\chi, v}(\mathcal{D}) := \sum_{v^\prime\in \mathcal{C}(v)} M^\text{Lap}_{\chi(v^\prime)}(\mathcal{D}).
	\end{equation}
	
	The class $\mathcal{F}^\text{tree}_m$ of level-uniform tree-based mechanisms consists of all mechanisms $\mathbf{F}^\text{tree}_{\mathbf{n},\boldsymbol{\varepsilon}}$ with tree height $m-1$, parameterized by integers $n_1,\ldots,n_{m-1}$ and real numbers $\varepsilon_1,\ldots,\varepsilon_{m-1}$ as above. We let $\mathcal{F}^\text{tree} = \bigcup_{m\geq 2} \mathcal{F}^\text{tree}_m$.
\end{definition}
\begin{remark}
	Clearly, by setting $n_1 = n_2 = \ldots = n_{m-1} = 2$, we obtain the well-studied binary tree mechanism \cite{treedwork,treeshi}. {Similarly, by setting $m=2$, {giving rise to $n_1=K$ and $\varepsilon_1 = \varepsilon$}, we recover the CDF-via-histogram mechanism (see Sec.~\ref{sec:cdf_via_histogram}), i.e., $\mathbf{F}^\text{tree}_{\mathbf{n},\boldsymbol{\varepsilon}}(\mathcal{D}) = \mathbf{F}^\text{hist}(\mathcal{D}) $.} We mention that much larger classes of tree-based mechanisms can be constructed by allowing the number of children and the privacy budgets of nodes at the same level to be different; we however do not consider such mechanisms in this work.
\end{remark}

Further, we mention that as a possible extension of the level-uniform tree-based mechanisms defined above, one can potentially design mechanisms with different privacy allocations $\{\varepsilon_{i,v}\}$, for each node $v$ at level $i\in [m-1]$. However, from the argument that follows, we see that in the simple setting of private histogram release, such a non-uniform allocation of privacy budgets across bins can do no better than equal privacy budget allocations for each bin, in terms of squared $\ell_2$-error. Formally, a mechanism $\tilde{M}_{\mathbf{c}}^\text{Lap}(\mathcal{D})$ can be constructed with $\boldsymbol{\varepsilon} = \left(\varepsilon_{i}:\ i\in [K-1]\right)$ being the vector of \emph{non-level-uniform} privacy allocations across bins, which are constrained to make the overall mechanism $\varepsilon$-DP, as will be elaborated next. Such a mechanism is defined the exact same way as in \eqref{eq:mchihist}, with the only difference being that the random variable $Z_i'$ is now drawn (independent of $Z_j'$, $j\neq i$) from the Lap$\left(\frac{1}{\varepsilon_{i}}\right)$ distribution. It can be argued that the overall mechanism is thus $\left(\max\limits_{i\neq j} \left(\varepsilon_i+\varepsilon_j\right)\right)$-DP. The (non-level-uniform) privacy allocations are hence chosen so that $\varepsilon = \max\limits_{i\neq j}\left(\varepsilon_i+\varepsilon_j\right)$. It can then be 
 argued that $E_2\left(\tilde{M}_{\mathbf{c}}^\text{Lap}(\mathcal{D})\right)$ is minimized by choosing $\varepsilon_{i} = \varepsilon/2$, for all $i\in [K-1]$; we omit a formal proof since it is outside the scope of this paper. Following such an intuition, we restrict our attention to level-uniform tree-based mechanisms in this paper.
 
The following fact then holds, for reasons similar to those in \cite{treedwork,treeshi}.
\begin{proposition}
	Any mechanism $\mathbf{F}^\text{\normalfont tree}_{\mathbf{n},\boldsymbol{\varepsilon}}\in \mathcal{F}^\text{\normalfont tree}$ is $\varepsilon$-DP.
\end{proposition}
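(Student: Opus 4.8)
The plan is to decompose the mechanism level by level: argue that the perturbed counts at each level $i$ form an $\varepsilon_i$-DP release by the Laplace mechanism (Theorem~\ref{thm:dp}), invoke basic composition to conclude that the joint release of all node counts $M^\text{Lap}_{\chi}(\mathcal{D})$ is $\varepsilon$-DP, and finally appeal to post-processing invariance, since $\mathbf{F}^\text{tree}_{\mathbf{n},\boldsymbol{\varepsilon}}$ is a fixed, data-independent function of $M^\text{Lap}_{\chi}$. This mirrors the standard analysis of the binary tree mechanism in \cite{treedwork,treeshi}.

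First I would fix a level $i\in[m-1]$ and consider the vector-valued function $g_i(\mathcal{D}):=\left(\chi(v):\ v\ \text{at level }i\right)$. The key observation is that $\Delta_{g_i}\le 2$: by the recursive construction, the intervals $\{\mathcal{I}(v):\ v\ \text{at level }i\}$ are pairwise disjoint and partition $[a,b)$, so every data sample lies in the interval of exactly one level-$i$ node. Changing a single sample from $x_r$ to $x_r'$ therefore decreases $\chi(\cdot)$ by $1$ at the level-$i$ node containing $x_r$, increases it by $1$ at the node containing $x_r'$ (or leaves everything unchanged if these coincide), and affects no other level-$i$ count; hence $\lVert g_i(\mathcal{D}_1)-g_i(\mathcal{D}_2)\rVert_1\le 2$ for neighbouring $\mathcal{D}_1,\mathcal{D}_2$. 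Since \eqref{eq:mchi} releases the level-$i$ counts as $g_i(\mathcal{D})+(Z_v)_v$ with $Z_v\stackrel{\text{i.i.d.}}{\sim}\text{Lap}(2/\varepsilon_i)=\text{Lap}(\Delta_{g_i}/\varepsilon_i)$, Theorem~\ref{thm:dp} shows that the level-$i$ sub-vector of $M^\text{Lap}_{\chi}(\mathcal{D})$ is $\varepsilon_i$-DP (the root count equals the publicly known constant $N$ and contributes nothing).

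Next, the full collection $M^\text{Lap}_{\chi}(\mathcal{D})$ from \eqref{eq:mchilap} is the concatenation of these $m-1$ level-wise releases, with independent noise across levels. By basic (sequential) composition for differential privacy \cite{dworkroth}, this concatenation is $\left(\sum_{i=1}^{m-1}\varepsilon_i\right)$-DP, which equals $\varepsilon$-DP by the budget constraint in Definition~\ref{def:cdftree}. Finally, the output $\mathbf{F}^\text{tree}_{\mathbf{n},\boldsymbol{\varepsilon}}(\mathcal{D})=\tfrac{1}{N}\,\overline{M}^\text{Lap}_{\chi}(\mathcal{D})$ is obtained from $M^\text{Lap}_{\chi}(\mathcal{D})$ by the data-independent map that sums, for each leaf $v$, the coordinates indexed by the covering $\mathcal{C}(v)$ (cf.\ \eqref{eq:cdfcompute}) and divides by the known constant $N$; post-processing invariance of $\varepsilon$-DP \cite[Prop.~2.1]{dworkroth} then gives the claim.

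The only step that requires genuine care is the per-level sensitivity bound: one must confirm that the level-$i$ node intervals are indeed disjoint and exhaust $[a,b)$, so that a single-sample change perturbs at most two level-$i$ counts, and that the Laplace scale $2/\varepsilon_i$ chosen in \eqref{eq:mchi} is exactly $\Delta_{g_i}/\varepsilon_i$. Everything else — composition over levels and post-processing down to $\mathbf{F}^\text{tree}_{\mathbf{n},\boldsymbol{\varepsilon}}$ — is routine.
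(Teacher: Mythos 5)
Your proof is correct and is essentially the argument the paper intends: the paper states the proposition without proof, deferring to the standard analysis in \cite{treedwork,treeshi}, which is exactly the level-wise sensitivity-$2$ bound (since the level-$i$ intervals partition $[a,b)$), the Laplace mechanism per level, sequential composition over the budgets $\varepsilon_1,\ldots,\varepsilon_{m-1}$, and post-processing invariance that you spell out. No gaps.
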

In what follows, we exactly characterize the $\ell_2$-error of any mechanism $\mathbf{F}^\text{tree}_{\mathbf{n},\boldsymbol{\varepsilon}}\in \mathcal{F}^\text{tree}_m$. This then naturally leads to the identification of the parameters $n_1,\ldots,n_{m-1}$ and $\varepsilon_1,\ldots,\varepsilon_{m-1}$ that minimize an upper bound on the $\ell_2$-error, for fixed $K$ and $m$. We then optimize the $\ell_2$-error thus obtained over the branching factors, which in turn helps us identify an optimal tree-based mechanism $\textbf{F}^\text{tree}\in \mathcal{F}^\text{tree}$. For the purpose of the optimization procedures in this section, we relax $n_1,\ldots,n_{m-1}$ to be positive reals, instead of integers. In Section \ref{sec:integer}, we shall discuss some strategies for choosing good \emph{integer} parameters $n_1,\ldots,n_{m-1}$ that give rise to low $\ell_2$-error. 

\begin{theorem}
	\label{thm:e2tree}
	Given positive reals $n_1,\ldots,n_{m-1}$ such that $\prod_{i=1}^{m-1} n_i = K$, we have
	\[
	E_2\left(\mathbf{F}^\text{\normalfont tree}_{\mathbf{n},\boldsymbol{\varepsilon}}(\mathcal{D})\right) = \frac{4K}{N^2}\cdot  \sum_{i=1}^{m-1}\left(\frac{n_i-1}{\varepsilon_i^2}\right).
	\]
\end{theorem}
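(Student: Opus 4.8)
The plan is to reduce the statement to a combinatorial incidence count in the tree $\mathcal{T}_m$ and then evaluate that count using Lemma~\ref{lem:tau}. Write $v^{(j)}$ for the $j$-th leaf node, so that $\overrightarrow{\mathcal{I}}(v^{(j)})=[a,a+(b-a)j/K)$. By \eqref{eq:cdfcompute}, Lemma~\ref{lem:tau}, and \eqref{eq:mchi}, the intervals $\{\mathcal{I}(v'):v'\in\mathcal{C}(v^{(j)})\}$ partition $\overrightarrow{\mathcal{I}}(v^{(j)})$, whence $\overline{M}^{\text{Lap}}_{\chi,v^{(j)}}(\mathcal{D})=\overline{c}_j+\sum_{v'\in\mathcal{C}(v^{(j)})}Z_{v'}$; since $\chi(v_1)=N$ is released noiselessly, $\mathbf{F}^{\text{tree}}_{\mathbf{n},\boldsymbol{\varepsilon}}$ is an unbiased estimator of $\mathbf{F}(\mathcal{D})$ and hence $E_2$ equals its total variance. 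For each fixed $j$ the variables $\{Z_{v'}:v'\in\mathcal{C}(v^{(j)})\}$ are independent with $\mathrm{Var}(Z_{v'})=2(2/\varepsilon_i)^2=8/\varepsilon_i^2$ for $v'$ at level $i\in[m-1]$ (and $Z_{v_1}\equiv 0$), so
\[
E_2\bigl(\mathbf{F}^{\text{tree}}_{\mathbf{n},\boldsymbol{\varepsilon}}(\mathcal{D})\bigr)=\frac{1}{N^2}\sum_{j=1}^{K}\ \sum_{v'\in\mathcal{C}(v^{(j)})}\mathrm{Var}(Z_{v'})=\frac{8}{N^2}\sum_{i=1}^{m-1}\frac{S_i}{\varepsilon_i^2},
\]
where $S_i:=\sum_{j=1}^{K}\bigl|\mathcal{C}(v^{(j)})\cap\{\text{nodes at level }i\}\bigr|$. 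It then suffices to show $S_i=\tfrac12(n_i-1)K$ for each $i\in[m-1]$.

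To evaluate $S_i$, sum over every leaf $v=v_{1,j_1,\ldots,j_{m-1}}$ the number of level-$i$ nodes in $\mathcal{C}(v)$, which Lemma~\ref{lem:tau} makes explicit in terms of $\tau=\tau(v)$: this number is $j_i-1$ (the nodes $v_{1,j_1,\ldots,j_{i-1},<j_i}$) if $i<\tau$; it is $j_i$ (those nodes together with $v_{1,j_1,\ldots,j_\tau}$) if $i=\tau$; and it is $0$ if $i>\tau$, as well as for the last leaf $v_{1,n_1,\ldots,n_{m-1}}$ (where $\mathcal{C}(v)=\{v_1\}$). Now fix a ``head'' $(j_1,\ldots,j_{i-1})$ and a ``tail'' $(j_{i+1},\ldots,j_{m-1})$ and sum the above count over $j_i\in\{1,\ldots,n_i\}$. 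If the tail differs from $(n_{i+1},\ldots,n_{m-1})$, then $\tau>i$ for all $j_i$ and the inner sum is $\sum_{j_i=1}^{n_i}(j_i-1)=\binom{n_i}{2}$; if the tail equals $(n_{i+1},\ldots,n_{m-1})$, then $\tau=i$ for $j_i\in\{1,\ldots,n_i-1\}$ while the count is $0$ for $j_i=n_i$, giving $\sum_{j_i=1}^{n_i-1}j_i=\binom{n_i}{2}$ once more. So the inner sum is $\binom{n_i}{2}$ regardless, and summing over the $\prod_{r=1}^{i-1}n_r$ heads and $\prod_{r=i+1}^{m-1}n_r$ tails gives $S_i=\binom{n_i}{2}\prod_{r=1}^{i-1}n_r\prod_{r=i+1}^{m-1}n_r=\binom{n_i}{2}\cdot\frac{K}{n_i}=\tfrac12(n_i-1)K$, where we used $\prod_{r=1}^{m-1}n_r=K$. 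Substituting into the display above, $E_2=\frac{8}{N^2}\sum_{i=1}^{m-1}\frac{(n_i-1)K}{2\varepsilon_i^2}=\frac{4K}{N^2}\sum_{i=1}^{m-1}\frac{n_i-1}{\varepsilon_i^2}$, as claimed.

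The only delicate part is the bookkeeping in the incidence count: one must assign each block of nodes in Lemma~\ref{lem:tau} to the correct level, note that the singleton $v_{1,j_1,\ldots,j_\tau}$ lives at level $\tau$ (so it contributes to the level-$i$ count precisely when $i=\tau$), and keep the degenerate last leaf separate. The mild surprise is the cancellation that makes the inner sum equal $\binom{n_i}{2}$ whether or not the tail is maximal, which is exactly what lets the product collapse to $K/n_i$. As a cross-check one could instead prove $S_i=\tfrac12(n_i-1)K$ by induction on the height $m-1$, stripping off level $0$ and applying the height-$(m-2)$ statement to each of the $n_1$ subtrees rooted at the children of $v_1$; but the direct count is shorter, so I would present that.
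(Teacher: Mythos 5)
Your proof is correct and follows essentially the same route as the paper's: both reduce $E_2$ to counting, for each level $i$, the total number of incidences between leaves and level-$i$ nodes of their coverings, and both find this count to be $K(n_i-1)/2$. The only difference is the order of summation -- the paper fixes a node $u$ and counts the leaves whose covering contains it (the multiplicity $s_u=(n_i-j_i)\prod_{t>i}n_t$), whereas you fix a leaf and count its level-$i$ covering nodes via Lemma~\ref{lem:tau} -- which is the same double count read in the transposed direction.
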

\begin{proof}
	We first observe that $$E_2\left(\mathbf{F}^\text{tree}_{\mathbf{n},\boldsymbol{\varepsilon}}(\mathcal{D})\right) =\frac{1}{N^2}\cdot\sum_{\text{\normalfont leaf nodes $v$}} E_2\left(\overline{M}^\text{Lap}_{\chi, v}(\mathcal{D})\right).$$ Now, consider any term $E_2(\overline{M}^\text{Lap}_{\chi, v})$, for a leaf node $v$. Note that
	\begin{align*}
	E_2\left(\overline{M}^\text{Lap}_{\chi, v}(\mathcal{D})\right) &= \text{Var}\left(\sum_{v^\prime\in \mathcal{C}(v)} {Z}_{v^\prime}\right)= \sum_{v^\prime\in \mathcal{C}(v)} \text{Var}\left({Z}_{v^\prime}\right),
	\end{align*}
by the independence of the $Z_{v^\prime}$ random variables. Hence, we have
\begin{equation} \label{eq:error1}E_2\left(\mathbf{F}^\text{tree}_{\mathbf{n},\boldsymbol{\varepsilon}}(\mathcal{D})\right) = \frac{1}{N^2}\cdot \sum_{\text{\normalfont leaf nodes $v$}} \sum_{v^\prime\in \mathcal{C}(v)} \text{Var}\left({Z}_{v^\prime}\right).\end{equation}
	In the summation in \eqref{eq:error1} above, let $s_u$ be the number of times any (non-root) node $u = v_{1,j_1,j_2,\ldots,j_i}$, for some $i\in [m-1]$, appears in the summation. Note that, by the assumption that the value $N$ is publicly known, we have $\text{Var}(Z_{v_1}) = 0$, and hence we set $s_{v_1} = 0$. Let $P(u)$ denote the (unique) parent of $u$ in $\mathcal{T}_m$ and let $\eta(u)$ denote the number of leaves of the subtree with $u$ as the root. By the structure of the sets $\mathcal{C}(v)$ for leaf nodes $v$, we have that \begin{align*}
		s_u &= \eta(P(u)) - \sum_{\stackrel{v = v_{1,j_1,\ldots,j_{i-1}, r_i}:}{\ r_i\leq j_i}}  \eta(v)\\
		&= \prod_{t=i}^{m-1} n_t - \sum_{r=1}^{j_i} \prod_{t=i+1}^{m-1}n_t\\
		&=\prod_{t=i}^{m-1} n_t - j_i\cdot \prod_{t=i+1}^{m-1}n_t = (n_i-j_i) \cdot \prod_{t=i+1}^{m-1}n_t.
	\end{align*}
	Hence, we have that 
	\begin{align*}
		N^2\cdot E_2(\mathbf{F}^\text{tree}_{\mathbf{n},\boldsymbol{\varepsilon}}(\mathcal{D}))  &= \sum_{u\text{ in $\mathcal{T}_m$}} s_u\cdot \text{Var}(Z_u)\\
		&= \sum_{i=1}^{m-1}\frac{8}{\varepsilon_i^2}
		\cdot \prod_{r=1}^{i-1} n_r\cdot \sum_{j_i = 1}^{n_i} \left((n_i-j_i) \prod_{t=i+1}^{m-1}n_t \right)\\
		&= \sum_{i=1}^{m-1}\frac{8}{\varepsilon_i^2}\cdot \prod_{t\neq i}n_t \sum_{j_i = 1}^{n_i}(n_i-j_i)\\
		&= 4 \prod_{t}n_t \cdot  \sum_{i=1}^{m-1}\left(\frac{n_i-1}{\varepsilon_i^2}\right)\\
		&= 4K\cdot  \sum_{i=1}^{m-1}\left(\frac{n_i-1}{\varepsilon_i^2}\right),
	\end{align*}
thus concluding the proof.
\end{proof}
\begin{remark}
Consider the estimation of the (approximate) CDF of a dataset $\mathcal{D}$ in the user-level differential privacy \cite{userlevel} setting, where each user $i\in [N]$ contributes $L\geq 1$ samples; here, the total number of samples in $\mathcal{D}$ is $NL$. For a fixed number of bins $K$, the parameters of a level-uniform tree-based mechanism $\mathbf{F}_{\mathbf{n},\boldsymbol{\varepsilon}}^\text{tree}$ must obey $\prod_{i\leq m-1} n_i = K$ and $L\cdot\sum_{i\leq m-1} \varepsilon_i = \varepsilon$, where the latter constraint ensures user-level privacy when potentially all of a user's samples are changed. Now, since in this setting the normalization in \eqref{eq:cdf} is by $NL$ instead of $N$, one obtains the \emph{same} expression for the squared $\ell_2$-error of $\mathbf{F}_{\mathbf{n},\boldsymbol{\varepsilon}}^\text{tree}$ as in Theorem \ref{thm:e2tree}. This indicates that standard techniques for trading off estimation error due to privacy for bias induced by clipping the number of samples contributed by each user \cite{userlevel,arat2024tit} will not be of use in this particular setting {where each user contributes \emph{equal} number of samples}.
\end{remark}

Theorem \ref{thm:e2tree} thus gives rise to a simple procedure for identifying an ``optimal'' tree-structure $\{n_i\}_{i\leq m-1}, \{\varepsilon_i\}_{i\leq m-1}$, that minimizes $E_2(\mathbf{F}^\text{tree}_{\mathbf{n},\boldsymbol{\varepsilon}})$, for given values of $\mathbf{n}$ or $\boldsymbol{\varepsilon}$. We recall that we relax the vector $\mathbf{n}$ to now be a vector of positive reals, instead of positive integers.
\begin{theorem}
	\label{thm:e2treeopt1}
	Given a vector of integers $\overline{\textbf{n}} = (\overline{n}_1,\ldots,\overline{n}_{m-1})\geq 0$ such that $\prod_{i\leq m-1} \overline{n}_i = K$, an optimal choice of positive reals $\boldsymbol{\varepsilon}:= (\varepsilon_1,\ldots,\varepsilon_{m-1})$ satisfying $\sum\limits_{i=1}^{m-1} \varepsilon_i = \varepsilon$ that minimizes $E_2(\mathbf{F}^\text{\normalfont tree}(\overline{\mathbf{n}},\boldsymbol{\varepsilon};\mathcal{D}))$, is given by
	\[
    \varepsilon_i = \varepsilon\cdot \frac{(\overline{n}_i-1)^{1/3}}{\sum\limits_{j=1}^{m-1}(\overline{n}_j-1)^{1/3}}.
 	\]
	Likewise, given $\overline{\boldsymbol{\varepsilon}} = (\overline{\varepsilon}_1,\ldots,\overline{\varepsilon}_{m-1})\geq 0$ such that $\sum_{i\leq m-1} \overline{\varepsilon}_i = \varepsilon$, an optimal choice of positive reals $\textbf{n}:= (n_1,\ldots,n_{m-1})$, satisfying $\prod\limits_{i=1}^{m-1} n_i = K$, that minimizes $E_2(\mathbf{F}^\text{\normalfont tree}(\mathbf{n},\overline{\boldsymbol{\varepsilon}};\mathcal{D}))$ is given by
	\[
	n_i = K^{1/(m-1)} \cdot \frac{\overline{\varepsilon}_i^2}{\left(\prod\limits_{j=1}^{m-1} \overline{\varepsilon}_j^2\right)^{1/(m-1)}}.
	\]
\end{theorem}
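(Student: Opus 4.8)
The plan is to prove both claims as straightforward constrained-optimization exercises using Theorem~\ref{thm:e2tree}, which tells us that (up to the positive constant $4K/N^2$) we are minimizing $g(\boldsymbol{\varepsilon}) = \sum_{i=1}^{m-1} (\overline{n}_i-1)/\varepsilon_i^2$ over $\varepsilon_i > 0$ subject to $\sum_i \varepsilon_i = \varepsilon$ in the first part, and minimizing $h(\mathbf{n}) = \sum_{i=1}^{m-1} (n_i-1)/\overline{\varepsilon}_i^2$ over $n_i > 0$ subject to $\prod_i n_i = K$ in the second. For the first part, I would introduce a Lagrange multiplier $\lambda$ for the equality constraint $\sum_i \varepsilon_i = \varepsilon$ and set $\partial/\partial \varepsilon_i \big[ \sum_j (\overline{n}_j-1)/\varepsilon_j^2 - \lambda(\sum_j \varepsilon_j - \varepsilon)\big] = 0$, giving $-2(\overline{n}_i-1)/\varepsilon_i^3 = \lambda$ for each $i$. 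Hence $\varepsilon_i^3 \propto (\overline{n}_i-1)$, i.e. $\varepsilon_i = c\,(\overline{n}_i-1)^{1/3}$ for a common constant $c$; imposing $\sum_i \varepsilon_i = \varepsilon$ fixes $c = \varepsilon / \sum_j (\overline{n}_j-1)^{1/3}$, which is exactly the claimed formula.

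For the second part, I would similarly use a Lagrange multiplier for the product constraint; it is cleanest to pass to logarithms, writing $t_i = \log n_i$ so that the constraint becomes $\sum_i t_i = \log K$ while the objective $\sum_i (e^{t_i}-1)/\overline{\varepsilon}_i^2$ is convex in the $t_i$. Stationarity gives $e^{t_i}/\overline{\varepsilon}_i^2 = \mu$ for a common multiplier $\mu$, i.e. $n_i = \mu\,\overline{\varepsilon}_i^2$; substituting into $\prod_i n_i = K$ yields $\mu^{m-1} \prod_j \overline{\varepsilon}_j^2 = K$, so $\mu = K^{1/(m-1)} / \big(\prod_j \overline{\varepsilon}_j^2\big)^{1/(m-1)}$, and therefore $n_i = K^{1/(m-1)}\, \overline{\varepsilon}_i^2 / \big(\prod_j \overline{\varepsilon}_j^2\big)^{1/(m-1)}$, matching the statement.

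To make these stationary points genuine global minima rather than merely critical points, I would invoke convexity: in the first part, each map $\varepsilon_i \mapsto (\overline{n}_i-1)/\varepsilon_i^2$ is convex on $(0,\infty)$ (its second derivative is $6(\overline{n}_i-1)/\varepsilon_i^4 \geq 0$, using $\overline{n}_i \geq 1$), so $g$ is convex on the open positive orthant and the linear constraint set is convex, whence the KKT point is the unique minimizer; in the second part, after the substitution $t_i = \log n_i$ the objective is a sum of convex exponentials and the constraint is linear, giving the same conclusion. I should also note the mild nondegeneracy points: if some $\overline{n}_i = 1$ the corresponding term vanishes and that level is inert (the formula still returns $\varepsilon_i = 0$ in the limit, consistent with dropping a redundant level), and the condition $\prod_i \overline{n}_i = K$ with $\overline{n}_i \geq 2$ from Definition~\ref{def:cdftree} ensures we are in the regime where the problem is well-posed; it is worth remarking in passing that the two formulas are coupled — plugging the optimal $\boldsymbol{\varepsilon}$ back in and then re-optimizing $\mathbf{n}$ is what drives the later fully-optimized result.

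The main obstacle here is essentially bookkeeping rather than a conceptual difficulty: one must be careful that the stationarity equations are solved consistently (the same multiplier across all $i$), that the normalization constants are extracted correctly from the constraints, and that the convexity argument is stated cleanly enough to upgrade "critical point" to "global minimum" without appealing to compactness (the feasible region is not compact, since $\varepsilon_i \to 0^+$ is allowed, but convexity together with the objective blowing up near the boundary handles this). A secondary subtlety, which I would flag but not belabor, is boundary behavior: as any $\varepsilon_i \to 0^+$ the objective diverges to $+\infty$ (when $\overline{n}_i > 1$), so the infimum is attained in the interior, justifying the use of the Lagrange/KKT conditions.
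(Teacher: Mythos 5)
Your proposal is correct and follows essentially the same route as the paper: reduce to the two constrained problems via Theorem~\ref{thm:e2tree}, apply Lagrange/KKT conditions (with the logarithmic change of variables $r_i=\log n_i$ for the product constraint), and invoke convexity to upgrade the stationary point to a global minimizer. Your added remarks on boundary behavior and attainment are sound refinements of the same argument rather than a different approach.
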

\begin{proof}
	Consider first the case when the vector of branching factors is fixed to be some $\overline{\mathbf{n}}$. From Theorem \ref{thm:e2tree}, we see that we simply wish to perform the following optimization procedure:
	\begin{align}
		&{\text{minimize}}\quad f(\boldsymbol{\varepsilon}):= \sum_{i=1}^{m-1} \frac{\overline{n_i}-1}{\varepsilon_i^2}\notag\\
		&\text{subject to:}\ \  \sum_{i=1}^{m-1} \varepsilon_i = \varepsilon. \tag{O$^1$} \label{eq:opt1}
	\end{align}
It can easily be seen that the function $f$ above is convex in $\boldsymbol{\varepsilon}$. By standard arguments in convex optimization theory, the KKT conditions are necessary and sufficient for optimality \cite[Sec. 5.5.3]{boyd} of a solution to (O$^1$). We hence obtain that, for all $i\in [m-1]$, we must have, for optimality,
\[
(\overline{n_i}-1)\cdot \varepsilon_i^{-3} = \lambda,
\]
for some constant $\lambda\geq 0$. Enforcing the condition that $\sum_{i\leq m-1} \varepsilon_i = \varepsilon$, and solving for $\lambda$ results in the value of $\varepsilon_i$ as in the theorem.

Next, consider the case when the vector of privacy budgets is fixed to be some $\boldsymbol{\varepsilon}$. Let us change variables to set $r_i:= \log n_i$, for all $i\in [m-1]$, with $\mathbf{r}:= (r_1,\ldots,r_{m-1})$. By the monotonicity of the exponential function, we see that our objective reduces to solving
	\begin{align}
	&{\text{minimize}}\quad g(\mathbf{r}):= \sum_{i=1}^{m-1} \frac{e^{r_i}-1}{\overline{\varepsilon_i}^2}\notag\\
	&\text{subject to:}\ \  \sum_{i=1}^{m-1} r_i = \log K. \tag{O$^2$} \label{eq:opt2}
\end{align}
Again, owing to the convexity of $g$ in $\mathbf{r}$, via the KKT optimality conditions, any optimal choice of $\{r_i\}$ values must satisfy
\[
e^{r_i} = \lambda' \cdot \overline{\varepsilon}_i^2,
\]
for some constant $\lambda'\geq 0$. Enforcing the condition that $\sum_{i\leq m-1} r_i = \log K$, and solving for $\lambda'$ results in the value of $n_i = e^{r_i}$ as stated in the theorem.
\end{proof}
Now, from Theorem \ref{thm:e2treeopt1}, we see that the function
\[
\overline{E}_2(\mathbf{F}^\text{tree}_{\mathbf{n},\boldsymbol{\varepsilon}}(\mathcal{D})):= \frac{4K}{N^2}\cdot  \sum_{i=1}^{m-1}\frac{n_i}{\varepsilon_i^2}
\]
satisfies
\[
\frac12\cdot\overline{E}_2(\mathbf{F}^\text{tree}_{\mathbf{n},\boldsymbol{\varepsilon}}(\mathcal{D}))\leq{E}_2(\mathbf{F}^\text{tree}_{\mathbf{n},\boldsymbol{\varepsilon}}(\mathcal{D}))\leq\overline{E}_2(\mathbf{F}^\text{tree}_{\mathbf{n},\boldsymbol{\varepsilon}}(\mathcal{D})),
\]
for all admissible branching factors $\mathbf{n}$ and privacy budgets $\boldsymbol{\varepsilon}$ such that $\prod_i n_i = K$ and $\sum_i \varepsilon_i = \varepsilon$, with $n_i\geq 2$ and $\varepsilon_i\geq 0$, for all $i\in [m-1]$. Thus, $\overline{E}_2(\mathbf{F}^\text{tree}_{\mathbf{n},\boldsymbol{\varepsilon}}(\mathcal{D}))$ serves as a good approximation for ${E}_2(\mathbf{F}^\text{tree}_{\mathbf{n},\boldsymbol{\varepsilon}}(\mathcal{D}))$. The following simple lemma then identifies the optimal tree structure that minimizes $\overline{E}_2(\mathbf{F}^\text{tree}_{\mathbf{n},\boldsymbol{\varepsilon}}(\mathcal{D}))$.

\begin{lemma}
	\label{lem:e2treeoptapprox}
	For a fixed integer $m\geq 2$, we have that the minimum of 
	$\overline{E}_2(\mathbf{F}^\text{\normalfont tree}_{\mathbf{n},\boldsymbol{\varepsilon}}(\mathcal{D}))
	$ over all admissible $\mathbf{n}, \boldsymbol{\varepsilon}$ 
	is achieved by setting $n_i = \beta := K^{\frac{1}{m-1}}$  and $\varepsilon_i = \widehat{\varepsilon}:= \frac{\varepsilon}{m-1}$, for all $i\in [m-1]$.
\end{lemma}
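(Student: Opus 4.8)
The plan is to observe that, since $\overline{E}_2(\mathbf{F}^\text{tree}_{\mathbf{n},\boldsymbol{\varepsilon}}(\mathcal{D})) = \frac{4K}{N^2}\sum_{i=1}^{m-1} n_i/\varepsilon_i^2$, minimizing it over all admissible $\mathbf{n},\boldsymbol{\varepsilon}$ is the same as minimizing $\Phi(\mathbf{n},\boldsymbol{\varepsilon}) := \sum_{i=1}^{m-1} n_i/\varepsilon_i^2$ subject to $\prod_{i=1}^{m-1} n_i = K$ and $\sum_{i=1}^{m-1}\varepsilon_i = \varepsilon$ (with $n_i\geq 2$ for all $i$). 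In contrast to the two separate optimizations in Theorem \ref{thm:e2treeopt1}, this is a joint problem in both sets of variables, and $\Phi$ is not jointly convex in $(\mathbf{n},\boldsymbol{\varepsilon})$; so rather than apply the KKT conditions to $\Phi$ directly, I would decouple the two constraints via two successive applications of the AM--GM inequality.

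First I would apply AM--GM to the $m-1$ summands of $\Phi$: using $\prod_{i=1}^{m-1} n_i = K$,
\[
\Phi(\mathbf{n},\boldsymbol{\varepsilon}) \;\geq\; (m-1)\left(\prod_{i=1}^{m-1} \frac{n_i}{\varepsilon_i^2}\right)^{1/(m-1)} \;=\; (m-1)\,K^{1/(m-1)}\left(\prod_{i=1}^{m-1}\varepsilon_i\right)^{-2/(m-1)},
\]
with equality iff $n_i/\varepsilon_i^2$ is independent of $i$. Since the right-hand side strictly decreases in $\prod_i\varepsilon_i$, I would then apply AM--GM a second time to $\varepsilon_1,\ldots,\varepsilon_{m-1}$ under $\sum_{i}\varepsilon_i = \varepsilon$ to get $\prod_{i=1}^{m-1}\varepsilon_i \leq (\varepsilon/(m-1))^{m-1}$, with equality iff all the $\varepsilon_i$ are equal. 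Chaining the two bounds gives $\Phi(\mathbf{n},\boldsymbol{\varepsilon}) \geq (m-1)^3 K^{1/(m-1)}/\varepsilon^2$ for every admissible pair.

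It then remains to check that this lower bound is attained at $n_i = \beta = K^{1/(m-1)}$, $\varepsilon_i = \widehat{\varepsilon} = \varepsilon/(m-1)$, which is routine: the point is feasible ($\prod_i\beta = K$, $\sum_i\widehat{\varepsilon} = \varepsilon$), the ratio $n_i/\varepsilon_i^2 = K^{1/(m-1)}(m-1)^2/\varepsilon^2$ is independent of $i$, and the $\varepsilon_i$ are all equal, so both equality conditions hold and $\Phi$ attains $(m-1)^3 K^{1/(m-1)}/\varepsilon^2$; multiplying by $4K/N^2$ yields the claimed minimal value of $\overline{E}_2$. The one subtlety worth flagging is admissibility: the minimizer has $n_i = K^{1/(m-1)}$, which respects $n_i\geq 2$ precisely when $K\geq 2^{m-1}$, a mild condition that must implicitly hold for the feasible set (with $\prod_i n_i = K$ and all $n_i\geq 2$) to be nonempty in the first place. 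An alternative route, paralleling the proof of Theorem \ref{thm:e2treeopt1}, is to substitute $r_i = \log n_i$, note that $\sum_i e^{r_i}/\varepsilon_i^2$ is jointly convex in $(\mathbf{r},\boldsymbol{\varepsilon})$ (each summand's $2\times 2$ Hessian has a positive diagonal entry and positive determinant), and solve the resulting KKT system, which returns the same point; the double-AM--GM argument is the shorter one. The main obstacle is exactly this non-convexity of the original joint problem --- once it is sidestepped, the remainder is bookkeeping together with the admissibility caveat.
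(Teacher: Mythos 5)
Your proof is correct and is essentially the paper's own argument: two chained applications of AM--GM (first to the summands $n_i/\varepsilon_i^2$ under $\prod_i n_i = K$, then to the $\varepsilon_i$ under $\sum_i \varepsilon_i = \varepsilon$), yielding the same lower bound $\frac{4K^{m/(m-1)}(m-1)^3}{N^2\varepsilon^2}$ attained at the claimed point. Your handling of the equality conditions and the admissibility caveat $K\geq 2^{m-1}$ is if anything slightly more careful than the paper's.
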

\begin{proof}
	Observe that
	\begin{align*}
		\overline{E}_2(\mathbf{F}^\text{tree}_{\mathbf{n},\boldsymbol{\varepsilon}}(\mathcal{D}))
		&\geq \frac{4K(m-1)}{N^2}\cdot \left(\prod_{i\leq m-1}\frac{n_i}{\varepsilon_i^2}\right)^{1/(m-1)}\\
		&= \frac{4K^{m/(m-1)}(m-1)}{N^2}\cdot \left(\frac{1}{\left(\prod_{i\leq m-1}\varepsilon_i\right)^{2/(m-1)}}\right)\\
		&\geq \frac{4K^{m/(m-1)}(m-1)^3}{N^2\cdot \left(\sum_{i\leq m-1}\varepsilon_i\right)^{2}}= \frac{4K^{m/(m-1)}(m-1)^3}{N^2 \varepsilon^2}.
	\end{align*}
Here, both the inequalities are due to the AM-GM inequality; they are met with equality if and only if the $n_i$ values are equal, for all $i$, and the $\varepsilon_i$ values are equal, for all $i$. Imposing the constraint that $\prod_i n_i = K$ and $\sum_i \varepsilon_i = \varepsilon$ yields the values $\beta, \widehat{\varepsilon}$.
\end{proof}

\begin{figure*}
	\centering
	\includegraphics[width = 0.8\linewidth]{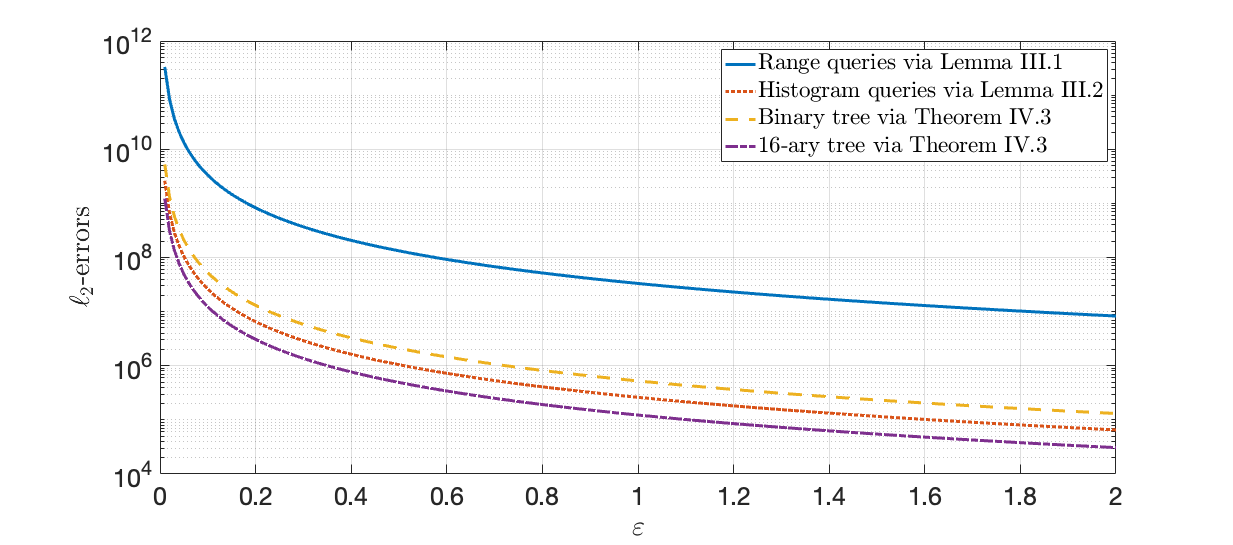}
	\caption{Plots of the normalized $\ell_2$-errors of the mechanisms discussed in the paper, for $K = 256$ and varying $\varepsilon$. The $\ell_2$-error axis is shown on a log-scale.}
	\label{fig:plots}
\end{figure*}

If we impose the additional constraint that $n_i\geq 3$, for all $i\in [m-1]$, then, Theorem \ref{thm:e2treeopt2} below tells us that the optimal tree structure that minimizes the true squared $\ell_2$-error ${E}_2(\mathbf{F}^\text{tree}_{\mathbf{n},\boldsymbol{\varepsilon}}(\mathcal{D}))$ also has equal branching factors and equal privacy budgets at all levels. For a given integer $m\geq 2$ corresponding to a tree $\mathcal{T}_m$, let $E_2^\text{OPT}$ denote this smallest squared $\ell_2$-error over all level-uniform tree-based mechanisms on $\mathcal{T}_m$ with branching factors $\mathbf{n}$ such that $n_i\geq 3$, for all $i\in [m-1]$.
\begin{theorem}
	\label{thm:e2treeopt2}
	For a fixed integer $m\geq 2$, we have that the minimum of $E_2(\mathbf{F}^\text{\normalfont tree}_{\mathbf{n},\boldsymbol{\varepsilon}}(\mathcal{D}))
	$ over all admissible $\mathbf{n}, \boldsymbol{\varepsilon}$, with $n_i\geq 3$, for all $i\in [m-1]$, is achieved by setting $n_i = \beta := K^{\frac{1}{m-1}}$  and $\varepsilon_i = \widehat{\varepsilon}:= \frac{\varepsilon}{m-1}$, for all $i\in [m-1]$.
	
Furthermore, for this choice of $\mathbf{n}, \boldsymbol{\varepsilon}$ values, the optimal $\ell_2$-error of any level-uniform tree-based mechanism is
\begin{align*}
E_2^{\text{\normalfont OPT}} &= \frac{4K(m-1)^3}{N^2\varepsilon^2} \cdot \left(-1+{K^{\frac{1}{m-1}}}\right)\\
&= \frac{4K(\log K)^3}{N^2\varepsilon^2}\cdot \left(\frac{\beta-1}{(\log \beta)^3}\right).
\end{align*}
\end{theorem}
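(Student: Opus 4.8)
The plan is to minimize the exact $\ell_2$-error from Theorem~\ref{thm:e2tree}, namely $E_2(\mathbf{F}^\text{tree}_{\mathbf{n},\boldsymbol{\varepsilon}}(\mathcal{D})) = \frac{4K}{N^2}\sum_{i=1}^{m-1}\frac{n_i-1}{\varepsilon_i^2}$, in two nested stages: first over $\boldsymbol{\varepsilon}$ for a fixed branching-factor vector $\mathbf{n}$, then over $\mathbf{n}$. The inner minimization is precisely the problem solved in the first half of Theorem~\ref{thm:e2treeopt1} --- whose proof relies only on convexity of $\boldsymbol{\varepsilon}\mapsto\sum_i (n_i-1)/\varepsilon_i^2$ and so is insensitive to whether the $n_i$ are integers --- giving the optimizer $\varepsilon_i \propto (n_i-1)^{1/3}$. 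Substituting this back in and simplifying, the inner optimum equals $\frac{4K}{N^2\varepsilon^2}\bigl(\sum_{i=1}^{m-1}(n_i-1)^{1/3}\bigr)^3$, so it remains to minimize $\Phi(\mathbf{n}) := \sum_{i=1}^{m-1}(n_i-1)^{1/3}$ over positive reals $n_i \geq 3$ subject to $\prod_{i=1}^{m-1} n_i = K$. Note this feasible set is nonempty only when $K \geq 3^{m-1}$, in which case $\beta := K^{1/(m-1)} \geq 3$ and the candidate point $n_i = \beta$ is itself admissible.

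The crucial step is to linearize the multiplicative constraint by setting $t_i := \log n_i$, so the feasible set becomes $\{\mathbf{t} : \sum_i t_i = \log K,\ t_i \geq \log 3\}$ and the objective becomes $\sum_i \phi(t_i)$ with $\phi(t) := (e^t - 1)^{1/3}$. The main obstacle I anticipate is verifying that $\phi$ is convex on $[\log 3,\infty)$: a direct computation gives $\phi''(t) = \frac{e^t}{9}(e^t-1)^{-5/3}(e^t - 3)$, which is nonnegative exactly when $e^t \geq 3$, so the hypothesis $n_i \geq 3$ is precisely what makes the relevant one-dimensional function convex (a difficulty the companion Lemma~\ref{lem:e2treeoptapprox} avoids by working with the surrogate $\overline{E}_2$, for which the analogous function $e^{t/3}$ is convex unconditionally). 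Granted convexity, Jensen's inequality yields $\sum_i \phi(t_i) \geq (m-1)\,\phi\!\bigl(\tfrac{\log K}{m-1}\bigr) = (m-1)(\beta-1)^{1/3}$, with equality if and only if all $t_i$ coincide, i.e.\ $n_i = \beta$ for every $i$; feeding $n_i = \beta$ back through Theorem~\ref{thm:e2treeopt1} then forces $\varepsilon_i = \widehat\varepsilon = \varepsilon/(m-1)$ for all $i$.

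Finally I would assemble the value. Plugging $n_i = \beta$ and $\varepsilon_i = \widehat\varepsilon$ into Theorem~\ref{thm:e2tree} gives $E_2^{\text{OPT}} = \frac{4K}{N^2}\cdot(m-1)\cdot\frac{\beta-1}{(\varepsilon/(m-1))^2} = \frac{4K(m-1)^3}{N^2\varepsilon^2}(\beta-1)$, and the second displayed form follows from $m-1 = \log K/\log\beta$, valid since $\beta = K^{1/(m-1)}$. Beyond the convexity computation, the only points requiring care are that the two-stage reduction $\min_{\mathbf{n},\boldsymbol{\varepsilon}} = \min_{\mathbf{n}}\min_{\boldsymbol{\varepsilon}}$ is legitimate (immediate, as the constraint region is a product over the two variable blocks) and that equality in Jensen is attained at a feasible point, which holds as noted above.
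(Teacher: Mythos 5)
Your proof is correct, and it takes a genuinely different route from the paper's. The paper substitutes $r_i=\log n_i$ and establishes \emph{joint} convexity of $\Phi(\mathbf{r},\boldsymbol{\varepsilon})=\sum_i (e^{r_i}-1)/\varepsilon_i^2$ by computing the $2\times 2$ Hessian of each bivariate summand (done in an appendix; the determinant is nonnegative precisely when $e^{r}\geq 3$, which is where the hypothesis $n_i\geq 3$ enters), and then verifies that the symmetric point satisfies the KKT conditions, which are sufficient by convexity. You instead eliminate $\boldsymbol{\varepsilon}$ first via the already-proved Theorem~\ref{thm:e2treeopt1}, correctly obtaining the partially minimized value $\frac{4K}{N^2\varepsilon^2}\bigl(\sum_i(n_i-1)^{1/3}\bigr)^3$, and then reduce the remaining problem to the convexity of the single-variable function $\phi(t)=(e^t-1)^{1/3}$ on $[\log 3,\infty)$ plus Jensen; your computation $\phi''(t)=\tfrac{e^t}{9}(e^t-1)^{-5/3}(e^t-3)$ is right, and it is a nice observation that the same threshold $3$ appears in both arguments for structurally parallel reasons. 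Your route buys a few things: it avoids the multivariate Hessian verification, it makes the role of the constraint $n_i\geq 3$ transparent at the scalar level, and the strict-Jensen step yields uniqueness of the minimizer (the paper's KKT argument only exhibits the symmetric point as \emph{an} optimum, which is all the theorem claims). The paper's route, in turn, is more uniform with the KKT-based style of its other optimality proofs and dispenses with the need to justify the nested-minimization interchange (which you correctly note is immediate here since the constraint set is a product). The final assembly of $E_2^{\text{OPT}}$ and the identity $m-1=\log K/\log\beta$ are both handled correctly.
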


\begin{proof}
	From Theorem \ref{thm:e2tree},we see that it suffices to consider the following optimization problem:
	\begin{align}
		&{\text{minimize}}\quad \sum_{i=1}^{m-1} \frac{n_i-1}{{\varepsilon_i}^2}\notag\\
		&\text{subject to:}\ \  \sum_{i\leq m-1} \varepsilon_i = \varepsilon\ \text{and}\ \prod_{i\leq m-1} n_i = K. \tag{O} \label{eq:optfin1}
	\end{align}
As in the proof of Theorem \ref{thm:e2treeopt1}, we first change variables and let $r_i:= \log n_i$, for all $i\in [m-1]$, with $\mathbf{r}:= (r_1,\ldots,r_{m-1})$. The optimization problem (O) then reduces to:
\begin{align}
	&{\text{minimize}}\quad \Phi(\mathbf{r},\boldsymbol{\varepsilon}):= \sum_{i=1}^{m-1} \frac{e^{r_i}-1}{{\varepsilon_i}^2}\notag\\
	&\text{subject to:}\ \  \sum_{i\leq m-1} \varepsilon_i = \varepsilon\ \text{and}\ \sum_{i\leq m-1} r_i = \log K. \tag{O'} \label{eq:optfin2}
\end{align}
	In Appendix \ref{sec:app-convex}, we prove that $\Phi$ is jointly convex in $\mathbf{r}, \boldsymbol{\varepsilon}$, when $r_i\geq \log 3$ and $\varepsilon> 0$. Then, via the KKT optimality conditions \cite[Sec. 5.5.3]{boyd}, any optimal $(\mathbf{r},\boldsymbol{\varepsilon})$ must obey
	\[
	\frac{e^{r_i}}{\varepsilon_i^2} = \mu\ \text{and}\ \frac{e^{r_i}-1}{\varepsilon_i^3} = \zeta,
	\]
	for some constants $\mu, \zeta\geq 0$. Clearly, the choice $r_1 = \ldots = r_{m-1} = \frac{\log K}{m-1}$ and $\varepsilon_1= \ldots = \varepsilon_{m-1} = \frac{\varepsilon}{m-1}$ satisfies these requirements. The expression for the $\ell_{2}$-error then holds from Theorem \ref{thm:e2tree}.
\end{proof}
The expression for the optimal $\ell_{2}$-error in Theorem \ref{thm:e2treeopt2} also allows us to explicitly identify that value of $\beta${, denoting the common branching factor at each level}, that minimizes $E_2^\text{OPT}$. Note that the Lemma below, unlike Theorem \ref{thm:e2treeopt2}, does not assume that $m$ is fixed; instead, it is implicitly assumed that $m = \log_\beta(K)+1$.
\begin{lemma}
	\label{lem:optbranch}
	The expression
	\[
	E_2^{\text{\normalfont OPT}} = E_2^{\text{\normalfont OPT}}(\beta) = \frac{4K(\log K)^3}{N^2\varepsilon^2}\cdot \left(\frac{\beta-1}{(\log \beta)^3}\right)
	\]
	is minimized over $\beta\geq 2$ by $\beta^\star$ that satisfies $\log \beta^\star = 3\cdot \left(\frac{\beta^\star-1}{\beta^\star}\right)$. In particular, the $E_2^{\text{\normalfont OPT}}(\beta)$ is minimized over the integers by $\beta^\star_\text{int} = 17$.
\end{lemma}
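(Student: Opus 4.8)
The plan is to treat $E_2^{\text{OPT}}(\beta)$ as a one-dimensional function of the real variable $\beta \geq 2$ and find its minimizer by elementary calculus. Writing $E_2^{\text{OPT}}(\beta) = C \cdot h(\beta)$ with $C := \frac{4K(\log K)^3}{N^2\varepsilon^2}$ a positive constant independent of $\beta$, and $h(\beta) := \frac{\beta - 1}{(\log \beta)^3}$, it suffices to minimize $h$. First I would differentiate: using the quotient rule,
\[
h'(\beta) = \frac{(\log\beta)^3 - (\beta-1)\cdot 3(\log\beta)^2\cdot \frac{1}{\beta}}{(\log\beta)^6} = \frac{(\log\beta)^2\left(\log\beta - \frac{3(\beta-1)}{\beta}\right)}{(\log\beta)^6} = \frac{\log\beta - 3\left(\frac{\beta-1}{\beta}\right)}{(\log\beta)^4}.
\]
Setting the numerator to zero gives the stationarity condition $\log\beta^\star = 3\left(\frac{\beta^\star - 1}{\beta^\star}\right)$, exactly as claimed.

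Next I would argue that this stationary point is indeed the global minimizer over $\beta \geq 2$. Define $\psi(\beta) := \log\beta - 3\left(\frac{\beta-1}{\beta}\right) = \log\beta - 3 + \frac{3}{\beta}$, so that $h'(\beta)$ has the same sign as $\psi(\beta)$ (since the denominator $(\log\beta)^4 > 0$ for $\beta > 1$). One checks $\psi'(\beta) = \frac{1}{\beta} - \frac{3}{\beta^2} = \frac{\beta - 3}{\beta^2}$, so $\psi$ is decreasing on $(1,3)$ and increasing on $(3,\infty)$, with $\psi(\beta) \to +\infty$ as $\beta \to \infty$ and $\psi(2) = \log 2 - \frac{3}{2} < 0$. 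Hence on $[2,\infty)$, $\psi$ is negative up to a unique root $\beta^\star$ and positive thereafter (the fact that $\psi$ has a unique root in $[2,\infty)$ follows from $\psi$ being negative at $\beta=2$, attaining its minimum at $\beta = 3$ where it is still negative, and then increasing monotonically to $+\infty$). Therefore $h$ is decreasing on $[2, \beta^\star]$ and increasing on $[\beta^\star, \infty)$, so $\beta^\star$ is the unique global minimizer. A quick numerical check locates $\beta^\star \approx 16.something$, lying strictly between $16$ and $18$.

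For the integer claim, since $h$ is strictly decreasing then strictly increasing with its real minimizer $\beta^\star$ between $16$ and $18$, the integer minimizer is whichever of $\lfloor\beta^\star\rfloor$ and $\lceil\beta^\star\rceil$ gives the smaller value of $h$ — that is, I would simply compare $h(16)$, $h(17)$, and $h(18)$ (it suffices to check the two integers bracketing $\beta^\star$, but checking all three is cleanest). Evaluating $\frac{\beta-1}{(\log\beta)^3}$ at $\beta = 16, 17, 18$ and observing that $\beta = 17$ yields the smallest value gives $\beta^\star_{\text{int}} = 17$. Strictly speaking one should also confirm that no integer outside $\{16,17,18\}$ can do better, which is immediate from the monotonicity of $h$ established above (values at integers $\leq 16$ exceed $h(16)$ once we are past any integer below the decreasing-to-increasing transition, and values at integers $\geq 18$ exceed $h(18)$).

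The main obstacle — though it is mild — is establishing the unimodality of $h$ on $[2,\infty)$ rigorously rather than just locating a critical point; this requires the sign analysis of $\psi$ via its derivative $\psi'(\beta) = \frac{\beta-3}{\beta^2}$ and the boundary behavior, which is what rules out other critical points and confirms $\beta^\star$ is a genuine global minimum. Everything else is routine differentiation and a finite numerical comparison.
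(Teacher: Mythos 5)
Your proposal is correct and follows essentially the same route as the paper: reduce to minimizing $f(\beta)=\frac{\beta-1}{(\log\beta)^3}$, derive the stationarity condition $\log\beta^\star = 3\bigl(\frac{\beta^\star-1}{\beta^\star}\bigr)$ from the sign of the derivative, establish that the critical point is unique (your sign analysis of $\psi(\beta)=\log\beta-3+\frac{3}{\beta}$ via $\psi'(\beta)=\frac{\beta-3}{\beta^2}$ is just a repackaging of the paper's comparison of the two increasing curves $\log\beta$ and $3\frac{\beta-1}{\beta}$), and finish with a finite comparison of the integer values around $\beta^\star\approx 16.79$.
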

\begin{proof}
	Observe that minimizing $E_2^{\text{OPT}}(\beta)$ over $\beta$ is equivalent to minimizing $$f(\beta) = \frac{\beta-1}{(\log \beta)^3}.$$
    Standard arguments show that the derivative $f'(\beta)$ is non-positive, for $\beta\leq \beta^\star$, and is positive, for $\beta>\beta^\star$, where $\beta^\star$ is as in the statement of the lemma. Here, we make use of the fact that the solution $\beta^\star$ to the equation $\log \beta = 3\cdot \left(\frac{\beta-1}{\beta}\right)$ is unique; indeed, observe that both $\alpha_1(\beta):= \log \beta$ and $\alpha_2(\beta):=3\cdot \left(\frac{\beta-1}{\beta}\right)$ are increasing with $\beta$. Further, we have $\alpha_1(3)<\alpha_2(3)$, with the derivatives obeying $\alpha_1'(\beta)<\alpha_2'(\beta)$, if and only if $\beta<3$, implying that $\alpha_1(\beta)<\alpha_2(\beta)$, for $\beta<3$. The above facts when put together show that the curves $\alpha_1(\cdot)$ and $\alpha_2(\cdot)$ cross each other exactly once, and this happens at a value $\beta = \beta^\star > 3$.
    
    Numerically, $\beta^\star$ can be estimated to be roughly $16.79$. Furthermore, when $\beta$ is constrained to be an integer, since $f(17)<f(16)$, we have that the minimizer $\beta^\star_\text{int}$ of $f$ over the integers, equals $17$.
\end{proof}
A couple of remarks are in order. Firstly, note from Lemma \ref{lem:optbranch} that the optimal (common) value of the branching factor $\beta^\star$ (equivalently, $\beta^\star_\text{int}$ over the integers) that minimizes the $\ell_{2}$-error of a level-uniform tree-based mechanism is \emph{independent} of the parameter $K$. Moreover, interestingly, $\beta^\star_\text{int} = 17$ is quite close to the ``optimal'' branching factor of $16$, which was obtained in \cite[Sec. 3.2]{qardaji} using somewhat heuristic analysis on a different error metric. Furthermore, we mention that the work \cite{qardaji} only considers those tree structures obtained by using a constant branching factor and privacy budgets at all levels of the tree; we \emph{explicitly prove} that such a constant branching factor and privacy budget is \emph{optimal} among all level-uniform tree-based mechanisms, for the $\ell_2$ error. In addition, our analysis also allows us to identify, via Theorem \ref{thm:e2treeopt1}, the optimal values of either $\mathbf{n}$ or $\boldsymbol{\varepsilon}$, when the value of the other is fixed. Some special cases such as geometrically distributed $\mathbf{n}$ and $\boldsymbol{\varepsilon}$ values were considered in \cite[Sec. IV]{cormode}.


Figure \ref{fig:plots} shows comparisons between the (normalized) $\ell_2$-errors of the mechanisms for approximate CDF estimation discussed until now, for the case when $K=256$; the errors are each multiplied by the common value $N^2$. We mention that for this (relatively small) value of $K$, the na\"ive histogram-based estimator $\mathbf{F}^\text{hist}$ performs better (has smaller $\ell_2$ error) compared to even the binary tree mechanism $\mathbf{F}^\text{tree}_{\mathbf{2},\frac{\varepsilon}{\log_2 K}\cdot \mathbf{1}}$. As expected (from the behaviour of $E_2^\text{OPT}$ in Lemma \ref{lem:optbranch}), the  $16$-ary level-uniform tree-based mechanism $\mathbf{F}^\text{tree}_{\mathbf{16},\frac{\varepsilon}{\log_{16} K}\cdot \mathbf{1}}$ with equal privacy budgets across all levels has the lowest $\ell_2$ error compared to the others plotted.

In Appendix \ref{sec:honaker}, with the aid of techniques from \cite{honaker}, we show that the level-uniform tree-based mechanism $\mathbf{F}^\text{tree}_{\beta\cdot \mathbf{1},\widehat{\varepsilon}\cdot \mathbf{1}}(\mathcal{D})$, with equal branching factors and equal privacy budgets across all tree levels, can be ``refined'' to yield a new mechanism $\mathbf{F}^\text{\normalfont tree, ref}_{\beta,\widehat{\varepsilon}}(\mathcal{D})$, with a much lower squared $\ell_2$-error, as given in Lemma \ref{lem:honaker}.

\begin{lemma}
	\label{lem:honaker}
	Given an integer $\beta = K^{1/(m-1)}$ and a positive real number $\widehat{\varepsilon} = \frac{\varepsilon}{m-1}$, we have
	\[
	E_2\left(\mathbf{F}^\text{\normalfont tree, ref}_{\beta,\widehat{\varepsilon}}(\mathcal{D})\right) = \frac{2K}{N^2}\cdot \left(\frac{\beta-1}{\widehat{\varepsilon}^2}\right)\cdot \sum_{i=1}^{m-1} \left(\sum_{j=0}^{m-1-i} \beta^{-j}\right)^{-1}.
	\]
\end{lemma}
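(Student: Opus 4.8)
The plan is to first pin down the concrete description of the refined mechanism that the appendix constructs, following the bottom-up aggregation idea of \cite{honaker}, and then to mirror the variance bookkeeping already used in the proof of Theorem \ref{thm:e2tree}. Since we are in the equal-parameter regime, every node $u$ at level $i\in[m-1]$ of $\mathcal{T}_m$ carries Laplace noise $Z_u$ of variance $8/\widehat{\varepsilon}^2=:\sigma^2$. For each $u$ we form a refined estimate $\widetilde{\chi}(u)$ of $\chi(u)$ by optimally (inverse-variance) combining the node's own noisy count $M^{\text{Lap}}_{\chi(u)}(\mathcal{D})$ (variance $\sigma^2$) with the sum $\sum_{c}\widetilde{\chi}(c)$ over the $\beta$ children $c$ of $u$. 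These two quantities are independent: the first depends only on $Z_u$, the second only on the noise variables attached to proper descendants of $u$, and distinct children root disjoint subtrees, so the children's refined estimates are themselves mutually independent with total variance $\beta w_{i+1}$ (writing $w_{i+1}$ for the refined variance at level $i+1$). Hence $1/w_i = 1/\sigma^2 + 1/(\beta w_{i+1})$ with $w_{m-1}=\sigma^2$; setting $a_i := \sigma^2/w_i$ turns this into the scalar recursion $a_i = 1 + a_{i+1}/\beta$, $a_{m-1}=1$, whose (routine) solution is $a_i = \sum_{j=0}^{m-1-i}\beta^{-j}$, i.e. $w_i = \sigma^2\,\big(\sum_{j=0}^{m-1-i}\beta^{-j}\big)^{-1}$. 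The refined prefix estimate $\widetilde{\overline{c}}_j$ is then the $\tfrac12$-$\tfrac12$ average of the ``left'' estimate $\sum_{v'\in\mathcal{C}(v_j)}\widetilde{\chi}(v')$ and the ``right'' estimate $N-\sum_{v'\in\mathcal{C}^{\mathrm{R}}(v_j)}\widetilde{\chi}(v')$, where $\mathcal{C}^{\mathrm{R}}(v_j)$ is the breadth-first covering of the complementary cumulative interval $[a+(b-a)j/K,\,b)$, and $\mathbf{F}^{\text{tree, ref}}_{\beta,\widehat{\varepsilon}}(\mathcal{D}) = \tfrac1N(\widetilde{\overline{c}}_1,\ldots,\widetilde{\overline{c}}_{K-1},N)$.

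Next I would compute the squared $\ell_2$-error. Within either the left covering $\mathcal{C}(v_j)$ or the right covering $\mathcal{C}^{\mathrm{R}}(v_j)$ of a fixed leaf, the nodes root pairwise-disjoint subtrees, so the refined estimates summed there are mutually independent; moreover $\mathcal{C}(v_j)$ and $\mathcal{C}^{\mathrm{R}}(v_j)$ root \emph{mutually} disjoint subtrees (they cover complementary sets of leaves), so the left and right prefix estimates are independent. Consequently $\mathrm{Var}(\widetilde{\overline{c}}_j)=\tfrac14(V^{\mathrm{L}}_j+V^{\mathrm{R}}_j)$ with $V^{\mathrm{L}}_j := \sum_{v'\in\mathcal{C}(v_j)} w_{\mathrm{lev}(v')}$ and $V^{\mathrm{R}}_j$ defined analogously, whence $N^2 E_2(\mathbf{F}^{\text{tree, ref}}_{\beta,\widehat{\varepsilon}}(\mathcal{D})) = \tfrac14\sum_{j}(V^{\mathrm{L}}_j+V^{\mathrm{R}}_j)$. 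By the left--right mirror symmetry of $\mathcal{T}_m$ one has $\sum_j V^{\mathrm{R}}_j = \sum_j V^{\mathrm{L}}_j$, so it suffices to evaluate $\sum_j V^{\mathrm{L}}_j = \sum_{i=1}^{m-1} w_i\big(\sum_{u\text{ at level }i} s_u\big)$, where $s_u$ is exactly the multiplicity computed in the proof of Theorem \ref{thm:e2tree}: for $u=v_{1,j_1,\ldots,j_i}$ we have $s_u=(\beta-j_i)\beta^{m-1-i}$, and summing over the $\beta^{i-1}$ choices of $(j_1,\ldots,j_{i-1})$ and over $j_i\in[\beta]$ gives $\sum_{u\text{ at level }i}s_u = \beta^{m-2}\cdot\tfrac{\beta(\beta-1)}{2} = \tfrac{K(\beta-1)}{2}$, independent of $i$. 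Assembling: $N^2 E_2 = \tfrac14\cdot 2\cdot \tfrac{K(\beta-1)}{2}\sum_{i=1}^{m-1}w_i = \tfrac{K(\beta-1)}{4}\cdot\frac{8}{\widehat{\varepsilon}^2}\sum_{i=1}^{m-1}\big(\sum_{j=0}^{m-1-i}\beta^{-j}\big)^{-1}$, which is exactly the claimed $\frac{2K}{N^2}\cdot\frac{\beta-1}{\widehat{\varepsilon}^2}\cdot\sum_{i=1}^{m-1}\big(\sum_{j=0}^{m-1-i}\beta^{-j}\big)^{-1}$.

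The genuinely easy parts are the scalar recursion for $w_i$ and the node-count identity (the latter is already carried out verbatim in Theorem \ref{thm:e2tree}). The main obstacle is the independence bookkeeping: one must verify carefully that the refined node estimates entering a single prefix query, and the left versus right prefix estimates of that query, depend on \emph{disjoint} collections of the injected Laplace variables, as this is precisely what kills all cross terms and produces the factor-$\tfrac12$ improvement over the unrefined left-covering estimator of Theorem \ref{thm:e2tree}. A secondary point worth checking is that the estimator uses the plain $\tfrac12$-$\tfrac12$ average of the left and right estimates (the inverse-variance-optimal weighting would do even better but would destroy the clean closed form), and that the symmetry $\sum_j V^{\mathrm{R}}_j=\sum_j V^{\mathrm{L}}_j$ indeed holds globally even though $V^{\mathrm{L}}_j\neq V^{\mathrm{R}}_j$ for individual leaves.
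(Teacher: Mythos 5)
Your proposal is correct and follows essentially the same route as the paper: the recursion for the refined node variances is exactly Proposition \ref{prop:hon}, and the final assembly is the paper's ``factor of half from averaging'' applied to the multiplicity computation of Theorem \ref{thm:e2tree}. You additionally spell out the independence/disjoint-subtree bookkeeping and the global identity $\sum_j V^{\mathrm{R}}_j=\sum_j V^{\mathrm{L}}_j$ that the paper leaves implicit, and both of these check out.
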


\section{A Discussion on Integer-Valued Branching Factors}
\label{sec:integer}
In the previous section, we relaxed Definition \ref{def:cdftree} to allow for non-integer values of the parameters $n_1,\ldots,n_{m-1}$, in an attempt to obtain an understanding about the optimal squared $\ell_{2}$-error. We then showed that the optimal integer branching factor, when one allows for non-integer values, is $\beta^\star_\text{int} = 17$. However, typical values of $K$ do not admit a (common) branching factor of $17$ for any choice of {height $\overline{m}$} of the tree $\mathcal{T}_m$; here, we set $\overline{m}:= m-1$, for ease of reading. In this section, we discuss how one can ameliorate this issue, for the purpose of practical implementations.

Suppose that an integer $K$ representing the number of bins is given. Definition \ref{def:cdftree} then suggests a method to design level-uniform tree-based mechanisms for the fixed value of $K$: pick an integer {$\overline{m}\geq 1$}
and integer factors $n_i\geq 2$, $i\in [\overline{m}]$ of $K$ such that $\prod_{i\leq \overline{m}} n_i = K$. Further, let $\boldsymbol{\varepsilon} = \widehat{\varepsilon}\cdot \mathbf{1}$ be the length-$(\overline{m})$ vector, each of whose entries is $\widehat{\varepsilon} = \frac{\varepsilon}{\overline{m}}$, with $\sum_{i\leq \overline{m}} \varepsilon_i = \varepsilon$. Such a choice immediately gives rise to an $\varepsilon$-DP mechanism $\mathbf{F}_{\mathbf{n},\boldsymbol{\varepsilon}}$, where $\mathbf{n} = (n_1,\ldots,n_{\overline{m}})$. Furthermore, via Theorem \ref{thm:e2tree}, we have an explicit expression for the squared $\ell_2$-error $E_2(\mathbf{F}_{\mathbf{n},\widehat{\varepsilon}\cdot \mathbf{1}}(\mathcal{D}))$. One can then find an optimal choice of $\mathbf{n}$ for the given value of $K$, by searching over all collections of factors $\{n_i\}$ as above.

We shall now briefly discuss the complexity of such an exhaustive search-based procedure. Let $P_K$ be the multiset of prime factors of $K$, including multiplicities, with $|P_K|$ equalling the prime omega function $\Omega(K)$ (see sequence A001222 in \cite{oeis}). For any given height $\overline{m}$ of the tree $\mathcal{T}_{{m}}$, one method of counting the number of possible allocations of branching factors is as follows: consider the number of ways of choosing some $\overline{m}$ disjoint subsets of $P_K$, where $1\leq \overline{m}\leq \Omega(K)$, such that their union equals $P_K$. Some thought reveals that the number of possible allocations is precisely this quantity, which in turn {is at most $\Omega(K)^{\Omega(K)}:= C(K)$.}
While we have that ``on average'', $\Omega(K)$ behaves like $\log \log K$ \cite[Sec. 22.10--22.11]{hardywright}, indicating that a nai\"ve, explicit search over $C(K)$ allocations is possibly not computationally hard, this procedure still leaves open the question of obtaining the prime factors of $K$ -- a well-known hard problem. This hence motivates the need for explicit, optimal solutions to minimizing $E_2(\mathbf{F}_{\mathbf{n},\widehat{\varepsilon}\cdot\mathbf{1}}(\mathcal{D}))$ over integer-valued $\mathbf{n}$; we discuss this next for selected values of $K$. 

In particular, in what follows, we show that the optimal branching factors for a {fixed height $\overline{m}$}
 of the tree $\mathcal{T}_{{m}}$, for the case when $K = p^t$, $p$ being a prime, can be computed explicitly. We also demonstrate explicit values for, or bounds on, the optimal heights of a tree when $K = b^t$, for selected integers $b$. For general values of $K$, our strategy hence is to modify Definition \ref{def:cdftree} slightly, by fixing a constant $\overline{K}\geq K$ and considering branching factors $n_1,\ldots,n_{\overline{m}}$ (and integer $\overline{m}\geq 1$) such that $\prod_{i\leq \overline{m}} n_i = \overline{K}\geq K$. In particular, we choose $\overline{K}$ to be that power of a prime $p$ that is closest to, and at least, $K$. Let $\overline{K} = p^t$, for some prime $p\geq 2$, with $t\geq 1$.

Now, fix an {$\overline{m}\geq 1$} and the privacy budgets given by $\boldsymbol{\varepsilon} = \widehat{\varepsilon}\cdot \mathbf{1}$.
Via Theorem \ref{thm:e2tree}, our task of finding an optimal allocation (in terms of the $\ell_2$-error) $\mathbf{n}$ for the number of bins being $\overline{K}$, reduces to the following problem:
\begin{align}
	&{\text{minimize}}\quad {E}(\mathbf{n}):= \sum_{i=1}^{\overline{m}} n_i\notag\\
	&\text{subject to:}\ \  \prod_{i\leq \overline{m}} n_i = p^t,\ n_i\in \mathbb{N}. \notag
\end{align}
Now, since the only factors of $\overline{K} = p^t$ are integers of the form $p^{\tau}$, where $\tau\leq t$, we can let $n_i = p^{t_i}$, for each $i\in [\overline{m}]$, where $t_i\geq 1$. Let $\mathbf{t}:= (t_1,\ldots,t_{\overline{m}})$. The integer optimization problem above hence reduces to the following problem:
\begin{align}
	&{\text{minimize}}\quad \overline{E}(\mathbf{n}):= \sum_{i=1}^{\overline{m}} p^{t_i}\notag\\
	&\text{subject to:}\ \  \sum_{i\leq \overline{m}} t_i = t,\ t_i\in \mathbb{N}. \tag{O$^\text{int}$} \label{eq:optint1}
\end{align}
Note that, by symmetry, it suffices to focus on assignments $\mathbf{t}$ such that ${1 \le \ } t_1\leq t_2\leq\ldots \leq t_{\overline{m}}$. {Observe also that the constraint $\sum_{i\leq \overline{m}} t_i = t$, in turn,  implies $\overline{m} \le t$.} Let $\tau\geq 0$ and $0\leq \rho< \overline{m}$ be the unique integers such that $t = \tau\cdot \overline{m}+\rho$. The following claim then holds true.
\begin{theorem}
	\label{thm:intbranch}
	For any $\overline{m}\geq 2$, we have that if $\overline{m}$ divides $t$, then an optimal assignment $\mathbf{t}^\star$ for \normalfont{(O$^\text{int}$)} is
	\[
	\mathbf{t}^\star = \frac{t}{\overline{m}}\cdot \mathbf{1}.
	\]
	Else, the unique optimal assignment $\mathbf{t}^\star$ of non-decreasing integers is
	\[
	t_i^\star = \begin{cases}
		\tau,\ \text{for }1\leq i\leq \overline{m}-\rho,\\
		\tau+1,\ \text{for }m-\rho < i\leq \overline{m}.
	\end{cases}
	\]
\end{theorem}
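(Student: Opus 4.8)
The plan is to prove this by a convexity-driven \emph{exchange} (or ``smoothing'') argument. Since $x\mapsto p^x$ is strictly convex for $p\geq 2$, the objective $\sum_{i=1}^{\overline{m}} p^{t_i}$ under the fixed-sum constraint $\sum_{i\leq\overline{m}} t_i = t$ ought to be minimized by the most balanced integer assignment, and the two cases in the statement describe precisely the most balanced non-decreasing assignments.

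First, as already observed in the text preceding the theorem, by symmetry of both the objective and the constraint under permutations of the coordinates, I would restrict attention to non-decreasing feasible assignments $1\leq t_1\leq t_2\leq\cdots\leq t_{\overline{m}}$. The core step is then the following: suppose a feasible assignment has indices $i<j$ with $t_j\geq t_i+2$. Replace the pair $(t_i,t_j)$ by $(t_i+1,t_j-1)$. This remains feasible for \eqref{eq:optint1} — the sum is unchanged, and both new entries are positive integers since $t_j-1\geq t_i+1\geq 2$ — and the change in the objective is
\[
\left(p^{t_i+1}+p^{t_j-1}\right)-\left(p^{t_i}+p^{t_j}\right) = (p-1)\left(p^{t_j-1}-p^{t_i}\right) < 0,
\]
because $t_j-1\geq t_i+1>t_i$ and $p\geq 2$. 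Hence any optimal assignment must satisfy $\max_i t_i-\min_i t_i\leq 1$.

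Next I would identify the unique multiset realizing this property. Writing $t=\tau\cdot\overline{m}+\rho$ with $0\leq\rho<\overline{m}$, a collection of $\overline{m}$ positive integers with spread at most $1$ and sum $t$ must consist of $\overline{m}-\rho$ copies of $\tau$ and $\rho$ copies of $\tau+1$: indeed these are the only two allowed values (just $\tau$ when $\rho=0$), and $(\overline{m}-\rho)\tau+\rho(\tau+1)=\overline{m}\tau+\rho=t$ forces exactly this split. When $\overline{m}\mid t$ this is the assignment $\frac{t}{\overline{m}}\cdot\mathbf{1}$. When $\overline{m}\nmid t$, non-divisibility together with $\overline{m}\leq t$ (noted as a consequence of the constraint) gives $t>\overline{m}$, hence $\tau=\lfloor t/\overline{m}\rfloor\geq 1$, so all entries are genuine positive integers; sorting in non-decreasing order yields exactly the stated $\mathbf{t}^\star$. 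Since in this case every optimal assignment has this forced multiset, its non-decreasing arrangement is unique, which gives the uniqueness claim.

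The argument is essentially routine; the only points requiring any care are (i) verifying that the exchange step never violates the integrality constraint $t_i\geq 1$ — handled by the bound $t_j-1\geq 2$ above — and (ii) confirming, via $\overline{m}\leq t$, that the balanced assignment has all coordinates at least $1$ and is therefore feasible for \eqref{eq:optint1}. I do not anticipate a genuine obstacle beyond these bookkeeping checks.
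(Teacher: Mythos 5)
Your proposal is correct and its essential step is the same as the paper's: a pairwise exchange showing that whenever $t_j\geq t_i+2$, replacing $(t_i,t_j)$ by $(t_i+1,t_j-1)$ strictly decreases the objective, which forces the spread $\max_i t_i-\min_i t_i\leq 1$ and hence the balanced multiset. The only differences are cosmetic — the paper handles the divisible case via a convex relaxation and KKT conditions and treats $\overline{m}=2$ by direct computation, while your smoothing argument covers all cases uniformly, which is a mild simplification. One algebra slip to fix: the displayed difference factors as $(p-1)\left(p^{t_i}-p^{t_j-1}\right)$, not $(p-1)\left(p^{t_j-1}-p^{t_i}\right)$; with the correct factorization the strict negativity follows from $t_j-1>t_i$ exactly as you intend.
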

\begin{proof}
	Consider first the case where $\overline{m}$ divides $t$. Observe that if one relaxes the requirement that $t_i\in \mathbb{N}$, then the optimization problem (O$^\text{int}$) is convex in the variables $\{t_i\}$. By standard arguments using the necessity of the KKT conditions \cite[Sec. 5.5.3]{boyd}, we obtain that the optimal assignment $\mathbf{t}^\star \in \mathbb{R}^{\overline{m}}$ satisfies $\mathbf{t}^\star = \frac{t}{\overline{m}}\cdot \mathbf{1}$. Since each coordinate of $\mathbf{t}^\star$ is an integer, we have that $\mathbf{t}^\star$ is also optimal for the \emph{integer} optimization problem (O$^\text{int}$).
	
	Next, consider the case when $t$ is not a multiple of $\overline{m}$. Consider first the special case when $\overline{m} = 2$, with $t$ being odd. In this case, it is easy to show by direct computation that the assignment $t_1 = \left \lfloor t/2 \right \rfloor$, $t_2 = 1+t_1 = \left \lceil t/2 \right \rceil$ is the optimal integer solution. Here, we have $\tau = \left \lfloor t/2 \right \rfloor$ and $\rho = 1$, verifying the statement of the theorem in this case.
	
	Now, to show the claim for {$\overline{m} \ge 3$}, assume the contrary, i.e., that there exists an optimal solution $\mathbf{t}^\star$ with $t_{\overline{m}}^\star>t_1^\star+1$. Now, consider the vector $\mathbf{t}'$ with $t'_{\overline{m}} = t^\star_{\overline{m}}-1$ and $t'_1 = t^\star_1+1$, with $t'_i = t^\star_i$, for $i\notin \{1,\overline{m}\}$. Note that we have 
	\[
	p^{t_1'}+p^{t_{\overline{m}}'} = p^{t^\star_1+1}+p^{t^\star_{\overline{m}}-1} < p^{t^\star_1}+p^{t^\star_{\overline{m}}},
	\]
	for $p\geq 2$. This indicates that $\mathbf{t}^\star$ is not optimal, hence leading to a contradiction. Therefore, we must have that $t^\star_{\overline{m}} = t^\star_1+1$, in this case. Now, let $r$ be such that $t_1^\star = t_2^\star = \ldots t_{\overline{m}-r}^\star = \tau$, and $t_{m-r}^\star = t_{m-r+1}^\star = \ldots = t_{\overline{m}}^\star = \tau+1$. Since we must have $\sum_{i\leq \overline{m}} t_i^\star = t$, it is clear that $r = \rho$, as in the statement of the lemma.
\end{proof}
We next discuss optimal choices of the {height $\overline{m}$} of the tree $\mathcal{T}_m$, given a positive integer $\overline{K}$. To this end, observe from Theorem \ref{thm:e2tree} that when $\varepsilon_i = \widehat{\varepsilon} = \frac{\varepsilon}{\overline{m}}$, for all $i\in [\overline{m}]$, we have that for any admissible allocation of branching factors $\mathbf{n}$ such that $\prod_{i\leq \overline{m}} n_i = \overline{K}$,
\[
E_2(\mathbf{F}_{\mathbf{n},\widehat{\varepsilon}\cdot \mathbf{1}}) = c\overline{m}^2\cdot \sum_{i\leq \overline{m}} (n_i-1),
\]
for some constant $c>0$. Let us define $g_{\overline{m}}(\mathbf{n}):= \overline{m}^2\cdot \sum_{i\leq \overline{m}} (n_i-1)$. The following lemma thus holds.
\begin{lemma}
	\label{lem:helperint}
	We have that for all $\mathbf{n}$,
	\[
	g_{\overline{m}}(\mathbf{n})\geq g_{\overline{m}}(\overline{K}^{1/\overline{m}}\cdot \mathbf{1}).
	\]
\end{lemma}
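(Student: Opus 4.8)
The plan is to observe that for a fixed height $\overline{m}$, the quantity $g_{\overline{m}}(\mathbf{n}) = \overline{m}^2 \cdot \sum_{i\leq \overline{m}}(n_i-1) = \overline{m}^2\bigl(\sum_{i\leq \overline{m}} n_i - \overline{m}\bigr)$ is just a positive affine function of $\sum_{i\leq \overline{m}} n_i$. Hence minimizing $g_{\overline{m}}(\mathbf{n})$ over all admissible allocations $\mathbf{n}$ with $\prod_{i\leq \overline{m}} n_i = \overline{K}$ is equivalent to minimizing $\sum_{i\leq \overline{m}} n_i$ subject to the product constraint.

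First I would invoke the AM--GM inequality on the (positive) branching factors: for any $\mathbf{n}$ with $\prod_{i\leq\overline{m}} n_i = \overline{K}$,
\[
\frac{1}{\overline{m}}\sum_{i\leq \overline{m}} n_i \;\geq\; \Bigl(\prod_{i\leq \overline{m}} n_i\Bigr)^{1/\overline{m}} \;=\; \overline{K}^{1/\overline{m}},
\]
with equality if and only if all the $n_i$ are equal, i.e.\ $n_i = \overline{K}^{1/\overline{m}}$ for every $i$. Multiplying through by $\overline{m}$ and then by $\overline{m}^2$ and subtracting the constant $\overline{m}^3$ gives exactly $g_{\overline{m}}(\mathbf{n}) \geq \overline{m}^3\bigl(\overline{K}^{1/\overline{m}} - 1\bigr) = g_{\overline{m}}(\overline{K}^{1/\overline{m}}\cdot\mathbf{1})$, which is the claimed bound. (Equivalently, one can restate this as the $m-1 = \overline{m}$ special case of the first AM--GM step in the proof of Lemma \ref{lem:e2treeoptapprox}, now applied to the $n_i$ alone with the $\varepsilon_i$ all equal to $\widehat{\varepsilon}$.)

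There is essentially no technical obstacle here; the only point worth flagging is a caveat rather than a difficulty. The minimizer $\overline{K}^{1/\overline{m}}\cdot\mathbf{1}$ of the \emph{relaxed} problem is in general not integer-valued, so Lemma \ref{lem:helperint} only furnishes a (typically non-tight) lower bound on the true integer-constrained minimum of $g_{\overline{m}}$ over admissible integer allocations. This is precisely the gap that motivates the explicit integer analysis of Theorem \ref{thm:intbranch} and the subsequent discussion of optimal tree heights, and the lemma will be used downstream to bound the achievable $\ell_2$-error from below when comparing candidate heights $\overline{m}$.
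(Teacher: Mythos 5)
Your proof is correct and matches the paper's argument exactly: both rewrite $g_{\overline{m}}(\mathbf{n})$ as an affine function of $\sum_{i\leq\overline{m}} n_i$ and apply the AM--GM inequality under the constraint $\prod_{i\leq\overline{m}} n_i = \overline{K}$. The additional caveat about the minimizer not being integer-valued is accurate and consistent with how the paper uses the lemma (as a lower bound via $h(\overline{m})\geq g_{\overline{m}}(\overline{K}^{1/\overline{m}}\cdot\mathbf{1})$).
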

\begin{proof}
	Observe that
	\begin{align*}
		g_{\overline{m}}(\mathbf{n})&= \overline{m}^2\cdot \sum_{i\leq \overline{m}} (n_i-1)\\
		&=\overline{m}^3\cdot \left(\frac{\sum_{i\leq \overline{m}} n_i}{\overline{m}}-1\right)\\
		&\geq \overline{m}^3\cdot  \left(\left({\prod_{i\leq \overline{m}} n_i}\right)^{1/\overline{m}}-1\right)\\ &= \overline{m}^3\cdot (K^{1/\overline{m}}-1) = g_{\overline{m}}(\overline{K}^{1/\overline{m}}\cdot \mathbf{1}).
	\end{align*}
Here, the inequality is a simple consequence of the AM-GM inequality.
\end{proof}
Now, we define $$h(\overline{m}):= \min_{\substack{\mathbf{n}:\ \prod_{i\leq \overline{m}}n_i = \overline{K},\\\mathbf{n}\in \mathbb{N}^{\overline{m}}}} g_{\overline{m}}(\mathbf{n}).$$ From Lemma \ref{lem:helperint}, it is clear that $h(\overline{m})\geq g_{\overline{m}}(\overline{K}^{1/\overline{m}}\cdot \mathbf{1})$, with  equality if $K^{1/\overline{m}}$ is an integer. Now, let us define
\begin{equation}
	\label{eq:moptdef}
	\overline{m}_\text{OPT} := \argmin\limits_{\overline{m}\in \mathbb{N}} h(\overline{m}).
\end{equation}
Further, let
\[
f(\beta):= \frac{\beta-1}{(\log \beta)^3}.
\]
Recall from Lemma \ref{lem:optbranch} that for a fixed number of bins $\overline{K}$, the function $f$ characterizes the squared $\ell_2$ error $E_2^\text{OPT}$ of a tree with equal branching factors (and privacy budgets) at all levels, with $\beta$ denoting the common branching factor. In what follows, we present bounds on $\overline{m}_\text{OPT}$ for cases when $\overline{K} = \alpha^t$, for selected integers $\alpha$. In particular, we exactly identify the optimal height of the tree $\mathcal{T}_m$ for the case when $\alpha\geq 17$ and is prime.

\begin{theorem}
	\label{thm:helperint1}
	For any integer $\tilde{m}\in \mathbb{N}$ such that $b:= \overline{K}^{1/\tilde{m}}$ is an integer, with $b\geq 17$, we must have $\overline{m}_\text{\normalfont OPT}\geq \tilde{m}$.
	
	Further, for any integer $\tilde{m}\in \mathbb{N}$ such that $b:= \overline{K}^{1/\tilde{m}}$ is an integer, with $b\leq 16$, we must have {$\overline{m}_\text{\normalfont OPT}\leq \tilde{m}$}.
\end{theorem}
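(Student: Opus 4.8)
The plan is to compare the optimal-height objective $h$ evaluated at $\tilde m$ against $h$ evaluated at neighbouring heights, exploiting the fact that, at $\tilde m$, the constraint-minimizer $\overline K^{1/\tilde m}$ is an \emph{integer}, so Lemma~\ref{lem:helperint} is tight: $h(\tilde m) = g_{\tilde m}(b\cdot\mathbf 1) = \tilde m^3(b-1)$. Writing $\overline m_\text{OPT}$ as the argmin of $h$ over $\mathbb N$, it suffices to show that $h$ is ``non-increasing at $\tilde m$'' in the relevant direction: for the first assertion, that $h(\overline m) \ge h(\tilde m)$ for every $\overline m < \tilde m$; for the second, that $h(\overline m)\ge h(\tilde m)$ for every $\overline m > \tilde m$. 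First I would reduce to a one-dimensional comparison: for an arbitrary height $\overline m$, Lemma~\ref{lem:helperint} gives $h(\overline m)\ge \overline m^3\big(\overline K^{1/\overline m}-1\big) = g_{\overline m}(\overline K^{1/\overline m}\cdot\mathbf 1)$, which is exactly $\tfrac{(\log \overline K)^3}{(\log\beta)^3}\cdot \overline m^3 \cdot$ divided out — more cleanly, setting $\beta := \overline K^{1/\overline m}$ and recalling $\overline m = \log\overline K/\log\beta$, we get $h(\overline m)\ge (\log\overline K)^3\cdot f(\beta)$ with $f(\beta) = (\beta-1)/(\log\beta)^3$ as in Lemma~\ref{lem:optbranch}. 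Meanwhile $h(\tilde m) = \tilde m^3(b-1) = (\log\overline K)^3 f(b)$ exactly, since $b = \overline K^{1/\tilde m}$ is an integer.

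So the whole statement collapses to a monotonicity fact about $f$ that is already available from Lemma~\ref{lem:optbranch}: $f$ is decreasing on $[2,\beta^\star]$ and increasing on $[\beta^\star,\infty)$, with $\beta^\star_\text{int}=17$, i.e. $f$ is decreasing up to integer argument $16$ and increasing from integer argument $17$ onward. For the first assertion, take any $\overline m<\tilde m$; then $\beta = \overline K^{1/\overline m} > \overline K^{1/\tilde m} = b \ge 17 \ge \beta^\star$, so $f(\beta)\ge f(b)$ (using that $f$ is increasing beyond $\beta^\star$, and $\beta^\star<17\le b<\beta$), whence $h(\overline m)\ge (\log\overline K)^3 f(\beta)\ge (\log\overline K)^3 f(b) = h(\tilde m)$. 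Thus no height below $\tilde m$ can beat $\tilde m$, giving $\overline m_\text{OPT}\ge\tilde m$. For the second assertion, take any $\overline m>\tilde m$; then $\beta = \overline K^{1/\overline m} < b \le 16 \le \beta^\star$, i.e. $2\le \beta < b \le \beta^\star$, so $f(\beta)\ge f(b)$ now because $f$ is decreasing on $[2,\beta^\star]$; again $h(\overline m)\ge (\log\overline K)^3 f(\beta)\ge (\log\overline K)^3 f(b) = h(\tilde m)$, so no height above $\tilde m$ beats $\tilde m$ and $\overline m_\text{OPT}\le\tilde m$.

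The one technical point to be careful about is the boundary case $\beta=2$, since Lemma~\ref{lem:optbranch} and the branching-factor constraint only guarantee $\beta\ge 2$; but $f$ is still decreasing on the whole interval $[2,\beta^\star]$, so the argument in the second assertion goes through down to $\overline m$ as large as $\log_2\overline K$, and that is the largest admissible height anyway (each $n_i\ge 2$ forces $2^{\overline m}\le\prod n_i=\overline K$). The main obstacle — such as it is — is bookkeeping: making sure the strict/non-strict inequalities line up when $b$ equals one of the extreme values $16$ or $17$ or when $\overline m$ differs from $\tilde m$ by exactly one, and confirming that the inequality $h(\overline m)\ge g_{\overline m}(\overline K^{1/\overline m}\cdot\mathbf 1)$ from Lemma~\ref{lem:helperint} is the only place we lose tightness (it is, and at $\overline m=\tilde m$ it is not lost since $\overline K^{1/\tilde m}$ is an integer). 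No convexity or KKT machinery is needed beyond what Lemma~\ref{lem:optbranch} already provides.
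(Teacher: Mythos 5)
Your argument is correct and is essentially the paper's own proof: both reduce $h(\overline m)\ge(\log\overline K)^3 f(\overline K^{1/\overline m})$ via Lemma~\ref{lem:helperint}, note that this is tight at $\tilde m$ because $b$ is an integer, and invoke the monotonicity of $f$ on either side of $\beta^\star$ from Lemma~\ref{lem:optbranch}. The only cosmetic difference is that the paper records the comparison as a strict inequality $h(\widehat m)>h(\tilde m)$ (available to you too, since $\beta\neq b$ and $f$ is strictly monotone on the relevant side), which sidesteps any tie-breaking concern in the $\argmin$.
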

\begin{proof}
	We shall first prove the first statement in the theorem. Let $\widehat{m}$ be any integer such that {$\widehat{m} < \tilde{m}$}. It suffices to show that $h(\widehat{m})> h(\tilde{m})$. Indeed, observe that
	\begin{align*}
		h(\widehat{m})&\geq g_{\widehat{m}}(K^{1/\widehat{m}}\cdot \mathbf{1})\\
		&= { \left(\log K\right)^3 }\cdot f(K^{1/\widehat{m}})\\
		&> { \left(\log K\right)^3 }\cdot f(b) = g_{\tilde{m}}(b\cdot \mathbf{1}) = h(\tilde{m}).
	\end{align*}
Here, the second (strict) inequality follows from the properties of the function $f$ discussed in the proof of Lemma \ref{lem:optbranch}.

Likewise, to prove the second statement in the theorem, it suffices to show that $h(\widehat{m})> h(\tilde{m})$, for any integer $\widehat{m}>\tilde{m}$. It can be verified that the same sequence of inequalities above holds, due to the properties of the function $f$ discussed in the proof of Lemma \ref{lem:optbranch}.
\end{proof}
As direct corollaries of Theorem \ref{thm:helperint1}, we obtain the following statements.
\begin{corollary}
	\label{cor:int1}
	Let $\overline{K} = p^t$, for some integer $t\geq 1$ and prime $p\geq 17$. Then, we have $\overline{m}_\text{\normalfont OPT} = t$.	Furthermore, for $\overline{K}\leq 16$, we have $\overline{m}_\text{\normalfont OPT} = 1$.
\end{corollary}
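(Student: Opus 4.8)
The plan is to derive Corollary \ref{cor:int1} directly from Theorem \ref{thm:helperint1}, exploiting the fact that when $\overline{K}$ is a prime power, the set of admissible heights $\overline{m}$ is very restricted. First I would recall that if $\overline{K} = p^t$ with $p$ prime, then every admissible allocation of branching factors $n_1,\ldots,n_{\overline{m}}$ with $\prod_i n_i = \overline{K}$ and $n_i \geq 2$ must consist of powers of $p$, and hence $\overline{m}$ can be any integer in $\{1,2,\ldots,t\}$ (this is exactly the reduction to \eqref{eq:optint1} carried out just before Theorem \ref{thm:intbranch}, where $\overline{m} \leq t$). In particular, $\overline{m} = t$ is itself admissible and corresponds to the choice $n_i = p$ for all $i$, i.e., $b := \overline{K}^{1/t} = p$ is an integer.

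Next I would apply the first statement of Theorem \ref{thm:helperint1} with $\tilde{m} = t$: since $b = \overline{K}^{1/t} = p \geq 17$ is an integer, the theorem gives $\overline{m}_\text{OPT} \geq t$. Combined with the constraint $\overline{m} \leq t$ noted above (any admissible height for a prime power $p^t$ is at most $t$), this forces $\overline{m}_\text{OPT} = t$, proving the first claim. For the second claim, when $\overline{K} \leq 16$, I would take $\tilde{m} = 1$; trivially $b := \overline{K}^{1/1} = \overline{K} \leq 16$ is an integer, so the second statement of Theorem \ref{thm:helperint1} yields $\overline{m}_\text{OPT} \leq 1$. Since $\overline{m}_\text{OPT} \geq 1$ always (the height must be a positive integer), we conclude $\overline{m}_\text{OPT} = 1$.

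I do not anticipate a genuine obstacle here, since the corollary is essentially a bookkeeping consequence of Theorem \ref{thm:helperint1}; the only point requiring a modicum of care is the observation that for $\overline{K} = p^t$ the admissible heights are precisely $\{1,\ldots,t\}$, so that the lower bound $\overline{m}_\text{OPT} \geq t$ from the theorem can only be met with equality. It is worth also noting explicitly that when $b = p$ is prime and $\geq 17$, the hypothesis ``$b$ is an integer $\geq 17$'' of Theorem \ref{thm:helperint1} is satisfied with $\tilde m = t$, which is what ties the corollary to the prime-power case; if $p < 17$ one would instead invoke the second half of the theorem, but that regime is subsumed by the $\overline{K} \leq 16$ statement only when $p^t \leq 16$, which is why the corollary restricts to $p \geq 17$ in its first assertion.
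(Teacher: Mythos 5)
Your derivation is correct and is exactly the argument the paper intends: the paper states this result as a ``direct corollary'' of Theorem \ref{thm:helperint1} without writing out the details, and your proof supplies precisely those details (the lower bound $\overline{m}_\text{OPT}\geq t$ from the first half of the theorem combined with the admissibility constraint $\overline{m}\leq t$ for $\overline{K}=p^t$, and the upper bound $\overline{m}_\text{OPT}\leq 1$ from the second half when $\overline{K}\leq 16$). No gaps; the observation that every integer factorization of $p^t$ into factors $\geq 2$ has at most $t$ parts is the only point needing care, and you handle it.
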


\begin{corollary}
	\label{cor:int2}
	Suppose that $\overline{K} = q^t$, for integers $q, t\geq 1$. Let $\overline{m}_1, \overline{m}_2$ be integers with $b_1:= \overline{K}^{1/\overline{m}_1}$ and $b_2:= \overline{K}^{1/\overline{m}_2}$. If $b_1$ and $b_2$ are both integers that satisfy $b_1\geq 17$ and $b_2\leq 16$, then we have $\overline{m}_1\leq \overline{m}_\text{\normalfont OPT}\leq \overline{m}_2$.
\end{corollary}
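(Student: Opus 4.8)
The plan is to obtain Corollary~\ref{cor:int2} as an immediate consequence of Theorem~\ref{thm:helperint1}, applied twice with two different choices of the auxiliary height. Note that the hypotheses of the corollary already guarantee that both $b_1 = \overline{K}^{1/\overline{m}_1}$ and $b_2 = \overline{K}^{1/\overline{m}_2}$ are integers, so there is nothing to verify before invoking the theorem; the assumption $\overline{K} = q^t$ merely provides the structural setting in which such integer roots can simultaneously exist.

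First I would instantiate Theorem~\ref{thm:helperint1} with $\tilde m = \overline{m}_1$. The associated quantity $b := \overline{K}^{1/\tilde m}$ is then exactly $b_1$, which is an integer with $b_1 \geq 17$ by hypothesis, so the first statement of the theorem yields $\overline{m}_\text{OPT} \geq \overline{m}_1$. Next I would instantiate the same theorem with $\tilde m = \overline{m}_2$, so that $b := \overline{K}^{1/\tilde m} = b_2$ is an integer with $b_2 \leq 16$; the second statement of the theorem then gives $\overline{m}_\text{OPT} \leq \overline{m}_2$. Chaining the two inequalities produces $\overline{m}_1 \leq \overline{m}_\text{OPT} \leq \overline{m}_2$, which is the claim.

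There is essentially no obstacle here: all of the analytic content---the monotonicity properties of $f(\beta) = (\beta-1)/(\log\beta)^3$ established in the proof of Lemma~\ref{lem:optbranch}, together with the lower bound $h(\overline m)\geq g_{\overline m}(\overline{K}^{1/\overline m}\cdot\mathbf{1})$ from Lemma~\ref{lem:helperint}---has already been absorbed into Theorem~\ref{thm:helperint1}. Corollary~\ref{cor:int2} is just the observation that the two halves of that theorem can be applied in tandem whenever $\overline K$ admits one integer root at least $17$ and another at most $16$. The only point worth double-checking is that the minimization defining $\overline{m}_\text{OPT}$ in~\eqref{eq:moptdef} is taken with respect to the same fixed $\overline K$ used to form $b_1$ and $b_2$, which it is.
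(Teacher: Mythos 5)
Your proposal is correct and matches the paper exactly: the paper presents Corollary~\ref{cor:int2} as a direct consequence of Theorem~\ref{thm:helperint1}, obtained precisely by applying the first part with $\tilde m = \overline{m}_1$ (so $b = b_1 \geq 17$ gives $\overline{m}_\text{OPT} \geq \overline{m}_1$) and the second part with $\tilde m = \overline{m}_2$ (so $b = b_2 \leq 16$ gives $\overline{m}_\text{OPT} \leq \overline{m}_2$). Nothing further is needed.
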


As example applications of Corollary \ref{cor:int2}, consider the case when $\overline{K} = 2^{20}$. Choosing $\overline{m}_1 = 4$ and $\overline{m}_2 = 5$, we see that $b_1 = \overline{K}^{1/\overline{m}_1} = 2^5>17$. Further, $b_2 = \overline{K}^{1/\overline{m}_2} = 2^4 = 16$. Hence, we have that $\overline{m}_\text{OPT}\in \{4,5\}$.

Likewise, consider the situation where $\overline{K} = 3^{6t}$, for some integer $t$. By choosing $\overline{m}_1 = 2t$ and $\overline{m}_2 = 3t$, we see that $b_1 = 3^3>17$ and $b_2 = 3^2<16$, implying that $\overline{m}_\text{OPT}\in [2t,3t]$, in this case.

While Theorem \ref{thm:e2treeopt1}  indicates that for mechanisms with equal privacy budget allocations across tree levels, we must have equal branching factors across levels too to achieve the smallest possible $\ell_2$-error, an analogous argument \emph{does not} hold in the case when the branching factors are restricted to be integers. For a fixed integer $\overline{K}$ and positive real $\varepsilon$, let $\mathbf{F}^\text{tree, eq}_{\mathbf{n},\varepsilon}$ denote any level-uniform tree-based mechanism with equal privacy budgets across all the tree levels. The following lemma, proved in Appendix \ref{app:intnoncommon}, then holds.

\begin{lemma}
	\label{lem:intnoncommon}
	For $\overline{K} = 2^t$, where $t\geq 11$ is prime, and any $\varepsilon>0$, there exists a level-uniform tree-based mechanism $\mathbf{F}^\text{\normalfont tree, eq}_{\mathbf{n}}$ with not-all-equal branching factors such that
	\[
	E_2\left(\mathbf{F}^\text{\normalfont tree,  eq}_{\mathbf{n}}\right)<\frac13\cdot E_2\left(\mathbf{F}^\text{\normalfont tree, eq}_{\beta\cdot \mathbf{1}}\right),
	\]
	for all admissible integers $\beta\geq 2$.
\end{lemma}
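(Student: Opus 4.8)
The plan is to use the primality of $t$ to pin down the competitors. Since $\overline{K}=2^t$, a level-uniform tree with a \emph{common} integer branching factor $\beta$ and $\prod_i n_i=\overline{K}$ must satisfy $\beta^{\overline{m}}=2^t$, which forces $\beta=2^{t/\overline{m}}$, hence $\overline{m}\mid t$, hence $\overline{m}\in\{1,t\}$; so the only admissible such mechanisms are the binary tree ($\beta=2$, $\overline{m}=t$) and the one-level histogram mechanism ($\beta=2^t$, $\overline{m}=1$). By Theorem~\ref{thm:e2tree} applied with equal budgets $\widehat{\varepsilon}=\varepsilon/\overline{m}$, for any admissible allocation one has
\[
E_2\bigl(\mathbf{F}^{\text{tree, eq}}_{\mathbf{n}}\bigr)=\frac{4\overline{K}}{N^2\varepsilon^2}\cdot g_{\overline{m}}(\mathbf{n}),\qquad g_{\overline{m}}(\mathbf{n})=\overline{m}^2\sum_{i\le\overline{m}}(n_i-1),
\]
with the \emph{same} constant $4\overline{K}/(N^2\varepsilon^2)$ for every $\mathbf{n}$. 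Since $g_t(\mathbf{2})=t^3$, while the one-level mechanism has $g_1=2^t-1$, and $t^3<2^t-1$ for all $t\ge 11$, it suffices to exhibit a not-all-equal admissible $\mathbf{n}$ with $g_{\overline{m}}(\mathbf{n})<t^3/3$.

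For the construction I would take $\overline{m}:=\lfloor t/3\rfloor$, and split the $\overline{m}$ levels into $a:=t-3\overline{m}$ levels of branching factor $16=2^4$ and $b:=4\overline{m}-t$ levels of branching factor $8=2^3$. Because $t$ is prime and $t\ge 11$, one checks $t/4<\overline{m}<t/3$, so $a,b\ge 1$; thus the branching factors are not all equal, and $\prod_i n_i=2^{4a+3b}=2^t=\overline{K}$, so this is a legitimate level-uniform tree-based mechanism (with, say, equal budgets). A short count gives $g_{\overline{m}}(\mathbf{n})=\overline{m}^2(15a+7b)=\overline{m}^2(8t-17\overline{m})$. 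Regarding $\phi(y):=y^2(8t-17y)$ as a real cubic, $\phi'(y)=y(16t-51y)$, so $\phi$ is maximized over $y>0$ at $y=16t/51$, with maximum value $\tfrac{34816}{132651}\,t^3\approx 0.263\,t^3<t^3/3$. Hence $g_{\overline{m}}(\mathbf{n})=\phi(\overline{m})<t^3/3$, which combined with the previous paragraph gives $E_2(\mathbf{F}^{\text{tree, eq}}_{\mathbf{n}})<\tfrac13 E_2(\mathbf{F}^{\text{tree, eq}}_{\beta\cdot\mathbf{1}})$ for every admissible integer $\beta\ge 2$.

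The one genuinely delicate step is the elementary number theory in the middle: showing $\overline{m}=\lfloor t/3\rfloor\in(t/4,t/3)$ for every prime $t\ge 11$, equivalently $4\lfloor t/3\rfloor>t$. This holds because $3\nmid t$ gives $\lfloor t/3\rfloor\ge (t-2)/3$, and $4(t-2)>3t$ whenever $t>8$. Everything else is either a direct appeal to Theorem~\ref{thm:e2tree} or a one-line calculus bound on $\phi$, and the gap between $\approx 0.263$ and $1/3$ leaves comfortable slack, so no fine constant-chasing is required. (If one prefers not to single out $\lfloor t/3\rfloor$, any integer in the open interval $(t/4,t/3)$ works: such an integer exists because for $t\ge 13$ the interval has length $t/12>1$, while $t=11$ is handled directly by $\overline{m}=3$.)
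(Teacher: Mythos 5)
Your proposal is correct and follows essentially the same strategy as the paper: use the primality of $t$ to reduce the common-branching-factor competitors to $\beta\in\{2,2^t\}$, note that the binary tree (error proportional to $t^3$) is the better of the two for $t\ge 11$, and then beat $t^3/3$ with a tree mixing branching factors $8$ and $16$. The only (harmless) difference is cosmetic: the paper splits into the cases $t=4s+1$ and $t=4s+3$ with explicit constructions $(16,\dots,16,32)$ and $(8,16,\dots,16)$ and compares polynomials directly, whereas you give a single construction with $\overline{m}=\lfloor t/3\rfloor$ and close via the bound $\max_{y>0} y^2(8t-17y)=\tfrac{34816}{132651}t^3<t^3/3$.
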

We end this section with a remark. As argued in Lemma \ref{lem:intnoncommon}, when the branching factors are restricted to be integers, for selected values of $\overline{K}$, the optimal branching factors in terms of $\ell_2$-error are not all equal across levels, when the privacy budgets are equal across levels. Given such a choice of {integer} branching factors, one can then refine the $\ell_2$-error further by choosing privacy allocations that are level-uniform, but potentially unequal-across-levels, via Theorem \ref{thm:e2treeopt1}.
\section{Post-Processing CDFs for Consistency}
\label{sec:post-process}

The material in Sections \ref{sec:tree} and \ref{sec:integer} provided us with an understanding of the structure(s) of  level-uniform tree-based mechanisms for obtaining low $\ell_2$-error. In this section, we demonstrate that it is possible in practice to reduce the error even further, by suitably post-processing the noisy CDF estimates to comply with the properties of a CDF.

In particular, we specify a natural \emph{consistency} requirement on any estimate of a CDF; the term ``consistency" here simply refers to standard desirable properties of any statistic of interest. We mention that several variants of post-processing private \emph{histogram} estimates for consistency have been considered in the literature 
\cite{barak,histhay,histml} and implemented in practice in the 2020 United States Census \cite{topdown}. Of immediate relevance are the work \cite{histhay}, which defined and explicitly solved an instance of a ``tree-based'' consistency problem for the $\ell_2$-error, and the work \cite{histml}, which provided iterative, numerical algorithms for tree-based consistency under the $\ell_1$-error. The output of such a tree-based consistency problem is a vector of ``consistent'' estimates of the counts at each node of the tree of interest.

While the constraints imposed via consistency extend naturally from private trees for histograms to private trees for CDFs, we take a simpler, ``first-principles'' approach to post-processing CDF estimates. Specifically, given any vector of (noisy) values that represents a CDF estimate, obtained via a tree-based approach or otherwise, we demonstrate a simple, dynamic programming-based algorithm for post-processing to yield consistent CDF estimates under \emph{any} error metric that is additive across the indices of the vector. In addition to facilitating downstream processing of the private CDF estimate, such post-processing typically also results in decreased errors.



\subsubsection{The CDF Consistency Problem}
Fix a number of bins $K\geq 1$ and any error metric $\Delta: \mathbb{R}^K\times \mathbb{R}^K\to \mathbb{R}$ such that for length-$K$ vectors $\mathbf{u}, \mathbf{v}$, we have
\[
\Delta(\mathbf{u},\mathbf{v}) = \sum_{i=1}^K d(u_i,v_i),
\]
for some function $d: \mathbb{R}^2\to \mathbb{R}$. Examples of such error metrics include the $\ell_1$- and $\ell_2$-error metrics (and in general, the $\ell_p$-error metric for any $p\geq 0$), the Hamming error metric, and so on. Let $\widehat{\mathbf{h}}(\mathcal{D}):= \widehat{\mathbf{h}} = (\widehat{h}_1,\ldots,\widehat{h}_K)$ be any vector that represents a potentially noisy estimate of the cumulative counts $\overline{\textbf{c}}$ defined in Section \ref{sec:prelim}. For example, $\widehat{\mathbf{h}}$ could be the vector $N\cdot \mathbf{F}_{\mathbf{n},\boldsymbol{\varepsilon}}^\text{tree}(\mathcal{D})$ in Definition \ref{def:cdftree}. We assume that $\widehat{h}_K = N$, since the total number of samples is publicly known. Since the cumulative count vector $\overline{\textbf{c}}$ is integer-valued, non-negative, and non-decreasing, with $\overline{c}_K = N$, we wish to ``optimally'' post-process $\widehat{\mathbf{h}}$ into such a consistent estimate $\mathbf{h}$ that minimizes $\Delta(\mathbf{h},\widehat{\mathbf{h}})$. In other words, our optimization problem of interest is:
\begin{align}
	&{\text{minimize}}\quad {\Delta}(\mathbf{h},\widehat{\mathbf{h}})\notag\\
	&\text{subject to:}\ \  \mathbf{h}\in \mathbb{N},\ \mathbf{h}\geq 0,\ h_K = N,\ h_i\leq h_{i+1}, \notag\\
	&\ \ \ \ \ \ \ \ \  \ \ \ \ \ \  \ \ \ \ \ \ \ \ \ \ \ \   \ \ \ \ \  \ \ \text{ for all $1\leq i\leq K-1$}. \label{eq:optcons} \tag{O$^c$}
\end{align}
An optimal, consistent CDF $\mathbf{F}^\text{cons}_\Delta$ can then be obtained from an optimal solution $\mathbf{h}^\text{OPT}$ to \eqref{eq:optcons} by simply setting $\mathbf{F}^\text{cons}_\Delta = \frac{1}{N}\cdot \mathbf{h}^\text{OPT}$. 
\subsubsection{A Dynamic Programming-Based Solution}
It turns out that the problem (\ref{eq:optcons}) has a simple solution that can be obtained via dynamic programming. To this end, we employ a trellis -- a directed, acyclic graph widely used in communication theory and error-control coding (see \cite{vardysurvey,viterbi} for more on the use of trellises for decoding over noisy communication channels). A trellis serves as a convenient data structure for our problem; in what follows, we briefly describe its structure. The trellis that we construct is a directed graph $\mathcal{G} = (\mathcal{V},\mathcal{E})$, whose vertex set $\mathcal{V}$ is partitioned into $K+1$ ``stages'' $\mathcal{V}_0, \mathcal{V}_1,\ldots,\mathcal{V}_K$, where $\mathcal{V}_0 = \{\textsf{root}\}$ and $\mathcal{V}_K = \{\textsf{toor}\}$. We set $\mathcal{V}_i = \{0,1,\ldots,N\}$, for $1\leq i\leq K-1$. Further, any (directed) edge $(v,v')$ in $\mathcal{G}$ begins at $v\in \mathcal{V}_i$ and ends at $v'\in \mathcal{V}_{i+1}$; further, we only allow edges $(v,v')$ where $v\leq v'$, with  $v,v'\in \{0,1,\ldots,N\}$, for each $i\in [K-1]$, and attach \textsf{root} to all nodes in $\mathcal{V}_1$ and all the nodes in $\mathcal{V}_{K-1}$ to  \textsf{toor}. Hence, the edge set $\mathcal{E}$ can also be partitioned into disjoint sets $\mathcal{E}_0, \mathcal{E}_1,\ldots, \mathcal{E}_K$, where $\mathcal{E}_i$ contains all admissible edges from stage $i-1$ to $i$, for $i\in [K]$. Clearly, the trellis $\mathcal{G}$ represents, via the labels of vertices along its paths, the collection of integer-valued, non-negative, non-decreasing sequences, each of whose coordinates is in $\{0,1,\ldots,N\}$. Further, classical results \cite{vardykschischang} confirm that $\mathcal{G}$ is the (unique) ``minimal'' trellis, possessing the fewest overall number of edges and vertices and having the smallest complexity of the dynamic programming method we shall describe next.

Given such a trellis $\mathcal{G}$, we assign to any edge $(v,v')$, where $v\in \mathcal{V}_i$ and $v'\in \mathcal{V}_{i+1}$, for some $i\in [0:K-1]$, the weight $w(v,v') = d(v,\widehat{h}_i)$. For example, using the $\ell_1$-error metric, we would have $w(v,v') = |v-\widehat{h}_i|$ and using the squared $\ell_2$-error metric, we would have $w(v,v') = (v-\widehat{h}_i)^2$. Now, we construct a vector $$\alpha:=\left(\alpha(v,i): i\in [0:K-1],\ v\in \mathcal{V}_i\right),$$  
indexed by tuples $(v,i)$ of (node, stage) pairs as follows. We set $\alpha(\textsf{toor},K) = 0$ and for $i\in [0:K-1]$,
\begin{equation}
	\label{eq:cons1}
\alpha(v,i) = \min_{v':\ (v,v')\in \mathcal{E}_{i+1}} \{w(v,v')+\alpha(v',i+1)\}.
\end{equation}
Furthermore, for each $(v,i)$ pair, we store that node $\iota(v,i)\in \mathcal{V}_{i+1}$ that attains the minimum above. Finally, we set
\begin{equation}
	\label{eq:cons2}
h_0:= \textsf{root}\ \text{and}\ h_K = N,
\end{equation}
and for $i\in [K-1]$, we assign
\begin{equation}
	\label{eq:cons3}
h_i = \iota(h_{i-1},i-1).
\end{equation}
It is clear from the sub-problem structure in \eqref{eq:optcons} that the above dynamic programming-based approach yields the optimal solution $\mathbf{h} = (h_1,\ldots,h_{K})$. Furthermore, the time and space complexity of this approach is $O(K\cdot N^2)$ (see \cite{vardykschischang} for a more careful handling of the complexity of such trellis-based dynamic programming approaches), which allows for relatively fast implementations in practice.

\subsubsection{Numerical Experiments}
Given the solution structure in \eqref{eq:cons1}--\eqref{eq:cons3}, we next run experiments to numerically verify the efficacy of the post-processing procedure. To this end, we set $K = 997$ and $N = 900$ and use a level-uniform tree-based mechanism with tree height equalling $1$; in other words, we use the mechanism $\mathbf{F}^\text{tree}_{K,\varepsilon} = \mathbf{F}^\text{hist}$. This is reasonable since from Corollary \ref{cor:int1}, we have that the optimal height under the $\ell_{2}$-error is indeed $1$, since $K$ is prime. 

We first sample $N$ uniformly random points in the interval $[a,b) = [0,K)$ and then compute the true counts $\chi(v)$ corresponding to each node $v$ with associated interval ${\mathcal{I}}(v) = \left[\frac{j-1}{K},\frac{j}{K}\right)$, for $j\in [K]$. This then allows us to compute the true CDF $\mathbf{F}$ pertaining to this dataset of samples. We then compute the noisy CDF $\mathbf{F}^\text{tree}_{K,\varepsilon}$ via \eqref{eq:mchi}, \eqref{eq:mchilap}, and Definition \ref{def:cdftree}. Next, we set $\widehat{h} = N\cdot \mathbf{F}^\text{tree}_{K,\varepsilon}$ and compute  optimal, consistent solutions $\mathbf{F}^\text{cons}_{\ell_2}$ and $\mathbf{F}^\text{cons}_{\ell_1}$ corresponding to the error metric $\Delta$ being the $\ell_2$- and $\ell_1$-error metrics, respectively. 
In addition to the apparent advantage of consistency that is enforced for downstream applications, the post-processing procedure also results in smaller errors. For example, the errors shown below are the empirical average errors incurred in $100$ Monte-Carlo simulations of the above procedure, when $\varepsilon = 0.1$. 
\[
\left \lVert \mathbf{F} - \mathbf{F}_{K,\varepsilon}^\text{tree}\right \rVert_1 \approx 502.81 \ \text{and}\ \left \lVert \mathbf{F} - \mathbf{F}_{\ell_1}^\text{cons}\right \rVert_1 \approx 286.43,\ \text{and}
\]
\[
\left \lVert \mathbf{F} - \mathbf{F}_{K,\varepsilon}^\text{tree}\right \rVert_2 \approx 18.54 \ \text{and}\ \left \lVert \mathbf{F} - \mathbf{F}_{\ell_2}^\text{cons}\right \rVert_2 \approx 10.72.
\]

\section{Conclusion}
\label{sec:conclusion}
In this paper, we defined and analyzed the broad class of level-uniform tree-based mechanisms for the $\varepsilon$-differentially private release of (approximate) CDFs. In particular, when the branching factors are relaxed to take on real values, we showed that for a fixed number of bins $K$, a simple, optimal strategy exists for matching either of the branching factors or the privacy budgets across tree levels, when the other is fixed. Furthermore, in this relaxed setting, if each of the branching factors is at least $3$, the overall optimal level-uniform tree-based mechanism, for sufficiently large number of bins, has equal branching factors (roughly {$17$}) and equal privacy budgets across all levels. Furthermore, when the integer constraint on branching factors is introduced, we discussed the structure of the optimal tree-based mechanisms, {when $K$ is a prime power. Finally, we discussed a general strategy} to further reduce errors by optimally post-processing the noisy outputs of the mechanism for consistency.

{In light of Lemma~\ref{lem:intnoncommon},} an interesting area of future work is the application of level-uniform tree based mechanisms, with potentially different branching factors at different tree levels, to the release of wider classes of statistics, such as those in \cite{decayedsums, windowed}. 





%
\bibliographystyle{IEEEtran}
{\footnotesize
	\bibliography{references}}

\appendices
\section{Proof of convexity of $\Phi$}
\label{sec:app-convex}
In this section, we prove that the function
\[
\Phi(\mathbf{r},\boldsymbol{\varepsilon})= \sum_{i=1}^{m-1} \frac{e^{r_i}-1}{{\varepsilon_i}^2}
\]
is jointly convex in $\mathbf{r}, \boldsymbol{\varepsilon}$.
\begin{proof}
	It suffices to show that the function
	\[
	\phi(r,\varepsilon):= \frac{e^{r}-1}{\varepsilon^2}
	\]
	is convex in $(r,\varepsilon)$, for $r\geq \log 3$ and $\varepsilon\geq 0$. The result then follows, since the sum of convex functions is convex.
	
	Now, observe that the Hessian of $\phi$ is
	\[
	H = \varepsilon^{-2}\cdot\begin{bmatrix}
		e^r & {-2e^r\varepsilon^{-1}}\\
		{-2e^{r}\varepsilon^{-1}} & {6\varepsilon^{-2}(e^r-1)}
	\end{bmatrix}.
	\]
	Since det$(H)\geq 0$ and trace$(H)> 0$, for $r\geq \log 3$ and $\varepsilon\geq 0$, we obtain that the eigenvalues of $H$ are both positive, implying that $H$ is positive definite. This, in turn, implies that $\phi$ is convex in admissible $(r,\varepsilon)$.
\end{proof}
\section{Improving $\ell_2$-Error Using Multiple Noisy Views}
\label{sec:honaker}
In this section, we briefly discuss a strategy for refining the noisy CDF estimate $\mathbf{F}^\text{tree}_{\beta\cdot \mathbf{1}, \widehat{\varepsilon}\cdot \mathbf{1}}$ obtained via a tree-based mechanism with equal branching factors (each equalling $\beta$) and equal privacy budgets (each equalling $\widehat{\varepsilon}$), by employing the approach used in \cite{honaker}. For this discussion, we fix a height $m-1$ of the tree $\mathcal{T}_m$ of interest. Now, the vector of estimates $M_{\chi}^\text{Lap}(\mathcal{D})$ (see \eqref{eq:mchilap}) is refined in two steps (we drop the explicit dependence on $\mathcal{D}$ for ease of reading):
\begin{enumerate}
	\item The noisy estimates $M_{\chi(u)}^\text{Lap}$ of counts corresponding to child nodes $u$ of a node $v$ in $\mathcal{T}_m$ are used to refine the estimate $M_{\chi(v)}^\text{Lap}$, and
	\item For each leaf node $v$, the estimates $\overline{M}_{\chi,v}^\text{Lap}$ of the cumulative counts corresponding to node $v$, obtained \emph{after} the refinement procedure in Step 1), are further refined using an estimate of the count in the interval $[a,b)\setminus \overline{\mathcal{I}}(v)$.
\end{enumerate}
We mention that Step 1) is precisely the ``estimation from below" approach in \cite[Sec. 3.2]{honaker}, and the composition of the two steps above is closely related, but suboptimal when compared to, the ``fully efficient estimation'' in \cite[Sec. 3.4]{honaker}. However, via the two steps above, we can obtain a clean expression for the squared $\ell_2$-error post the refinement procedure. Such a refinement can in practice then be followed by the post-processing method for consistency discussed in Section \ref{sec:post-process}. In what follows, we briefly present the improvement in squared $\ell_2$-error obtained via this refinement; for a more detailed exposition, we refer the reader to \cite{honaker}.

\subsubsection{Elaborating on Step 1)} In this step, we use the fact that $M_{\chi(v)}^\text{Lap}$ and $\sum_{u:\ u\in \psi(v)} M_{\chi(u)}^\text{Lap}$, where $\psi(v):= \{u:\ u\text{ is a child of }v\}$ are two estimates of the same quantity $\chi(v)$. Thus, for every non-root node $v$, we can recursively construct new unbiased estimates \begin{equation} \label{eq:hon1}\tilde{M}_{\chi(v)} = \alpha\cdot M_{\chi(v)}^\text{Lap}+(1-\alpha)\cdot \sum_{u:\ u\in \psi(v)} \tilde{M}_{\chi(u)}\end{equation} 
	of $\chi(v)$, where $\alpha$ is chosen so as to minimize the variance of $\tilde{M}_{\chi(v)}$; here, we set $\tilde{M}_{\chi(v)} = M_{\chi(v)}^\text{Lap}$, when $v$ is a leaf node. Further, as before, we simply set $\tilde{M}_{\chi(v_1)}^\text{Lap} = N$, for the root node $v_1$. Recall that Var$\left(M_{\chi(v)}^\text{Lap}\right) = \frac{8}{\widehat{\varepsilon}^2}$, for all (non-root) nodes $v$ in $\mathcal{T}_m$.
	\begin{proposition}
		\label{prop:hon}
		For any node $v$ at level $i\in [m-1]$ of $\mathcal{T}_m$, we have
		\begin{align*}
		\text{\normalfont Var}\left(\tilde{M}_{\chi(v)}\right) &= \left(\sum_{j=0}^{m-1-i} \beta^{-j}\right)^{-1}\cdot \frac{8}{\widehat{\varepsilon}^2}.
		\end{align*}
	\end{proposition}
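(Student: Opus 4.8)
The plan is to prove the claim by downward induction on the level $i$, from the leaves at level $m-1$ up to level $1$. The base case $i=m-1$ is immediate: for a leaf node $v$ we have by definition $\tilde{M}_{\chi(v)} = M_{\chi(v)}^\text{Lap}$, so $\text{Var}(\tilde{M}_{\chi(v)}) = \text{Var}(Z_v) = 8/\widehat{\varepsilon}^2$, which matches the claimed expression since $\sum_{j=0}^{0}\beta^{-j} = 1$.

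For the inductive step, fix a node $v$ at level $i\le m-2$ and assume the formula holds at level $i+1$. First I would record the key independence fact: by \eqref{eq:hon1} and an easy induction, $\tilde{M}_{\chi(u)}$ is an affine function of the noise variables $\{Z_w : w \text{ lies in the subtree rooted at } u\}$ only. Since $v$ has exactly $\beta$ children $u_1,\dots,u_\beta$ (we are in the equal-branching-factor tree, $n_i=\beta$ for all $i$), whose subtrees are pairwise disjoint and do not contain $v$, the variables $M_{\chi(v)}^\text{Lap}, \tilde{M}_{\chi(u_1)},\dots,\tilde{M}_{\chi(u_\beta)}$ are mutually independent, and each $\tilde{M}_{\chi(u_\ell)}$ is unbiased for $\chi(u_\ell)$. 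Hence $S_v := \sum_{\ell=1}^{\beta}\tilde{M}_{\chi(u_\ell)}$ is unbiased for $\chi(v)$ with $\text{Var}(S_v) = \beta\cdot V_{i+1}$, where $V_{i+1} := \bigl(\sum_{j=0}^{m-2-i}\beta^{-j}\bigr)^{-1}\cdot \tfrac{8}{\widehat{\varepsilon}^2}$ by the induction hypothesis.

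Next I would invoke the standard fact that, for two independent unbiased estimators of the same quantity with variances $\sigma_1^2$ and $\sigma_2^2$, the variance-minimizing convex combination has variance $(\sigma_1^{-2} + \sigma_2^{-2})^{-1}$, attained at $\alpha = \sigma_2^2/(\sigma_1^2+\sigma_2^2)\in(0,1)$. Applying this to the independent unbiased estimators $M_{\chi(v)}^\text{Lap}$ (variance $8/\widehat{\varepsilon}^2$) and $S_v$ (variance $\beta V_{i+1}$), and using that $\tilde{M}_{\chi(v)}$ in \eqref{eq:hon1} is by definition exactly this optimal combination, gives
\[
\text{Var}\bigl(\tilde{M}_{\chi(v)}\bigr) = \left(\frac{\widehat{\varepsilon}^2}{8} + \frac{1}{\beta V_{i+1}}\right)^{-1}.
\]
Substituting $V_{i+1}$ yields $\frac{1}{\beta V_{i+1}} = \frac{\widehat{\varepsilon}^2}{8}\sum_{j=1}^{m-1-i}\beta^{-j}$, so the bracketed term equals $\frac{\widehat{\varepsilon}^2}{8}\sum_{j=0}^{m-1-i}\beta^{-j}$; inverting gives precisely the claimed formula, completing the induction.

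I expect the only real subtlety — and the step worth stating carefully — to be the independence bookkeeping in the second paragraph: one must verify that the recursion \eqref{eq:hon1} never reuses a noise variable $Z_w$ across sibling subtrees, so that the $\tilde{M}_{\chi(u_\ell)}$ are genuinely independent and variances add. Once that is in place, the remainder is the textbook inverse-variance-weighting identity together with a finite geometric-series manipulation.
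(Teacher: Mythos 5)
Your proof is correct and follows essentially the same route as the paper's: downward induction from the leaves, with the inductive step given by optimally combining the direct noisy estimate $M_{\chi(v)}^\text{Lap}$ with the sum of the children's refined estimates via inverse-variance weighting. The only difference is that the paper simply cites \cite[Eq.~(4), Remark~2]{honaker} for the optimal $\alpha$ and the resulting variance $\bigl(\text{Var}(M_{\chi(v)}^\text{Lap})^{-1}+\beta^{-1}\text{Var}(\tilde{M}_{\chi(u)})^{-1}\bigr)^{-1}$, whereas you derive it and make the independence bookkeeping across sibling subtrees explicit, which is a harmless (and arguably welcome) elaboration.
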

\begin{proof}
	The claim holds by induction. Indeed, the base case when $i = m-1$ is trivial. Now, assume that the claim holds for all levels $j\in \{i+1,\ldots,m-1\}$. From \cite[Eq. (4)]{honaker}, we see that the optimal choice of $\alpha$ in \eqref{eq:hon1} is $$\alpha = \frac{\text{\normalfont Var}\left(M_{\chi(v)}^\text{Lap}\right)^{-1}}{\text{\normalfont Var}\left(M_{\chi(v)}^\text{Lap}\right)^{-1}+\beta^{-1}\cdot \text{Var}\left(\tilde{M}_{\chi(u)}\right)^{-1}},$$ where $u$ is any child node of $v$. Furthermore, we have from \cite[Remark 2]{honaker} that for this choice of $\alpha$,
	\begin{align*}
		\text{\normalfont Var}\left(\tilde{M}_{\chi(v)}\right)  = \left(\text{\normalfont Var}\left(M_{\chi(v)}^\text{Lap}\right)^{-1}+\beta^{-1}\cdot \text{Var}\left(\tilde{M}_{\chi(u)}\right)^{-1}\right)^{-1}.
	\end{align*}
	Making use of the induction hypothesis and the fact that Var$\left(M_{\chi(v)}^\text{Lap}\right) = \frac{8}{\widehat{\varepsilon}^2}$ yields the statement of the proposition.
\end{proof}
\subsubsection{Elaborating on Step 2)} With the aid of the vector $\tilde{M}_{\chi}:= (\tilde{M}_{\chi(v)}:\ v\text{ is a node of $\mathcal{T}_m$})$, we construct, similar to \eqref{eq:cdfcompute}, a vector $\overrightarrow{M}_{\chi}:= (\overrightarrow{M}_{\chi, v}:\ v\text{ is a leaf node of $\mathcal{T}_m$})$, where
\begin{equation}
	\label{eq:mright}
\overrightarrow{M}_{\chi, v} = \sum_{v^\prime\in \mathcal{C}(v)} \tilde{M}_{\chi(v^\prime)}.
\end{equation}
Now, for any leaf node $v$, we associate the "right cumulative" interval $\overleftarrow{\mathcal{I}}(v):= [a,b)\setminus \overrightarrow{\mathcal{I}}(v)$. Similar to the covering $\mathcal{C}(v) = \mathcal{C}(\overrightarrow{\mathcal{I}})$ defined in Section \ref{sec:tree}, we define the "right covering" $\overline{\mathcal{C}}(v)$ by the following iterative procedure. We first set $\overline{\mathcal{J}} = \overleftarrow{\mathcal{I}}$ and $\overline{\mathcal{C}}(v)$ to be the empty set. Next, we start at level $i=0$ and identify the set $\mathcal{W}_i$ consisting of all nodes $u$ at level $i$ such that $\cup_{u\in \mathcal{W}_i} \mathcal{I}(u) \subseteq \overline{\mathcal{J}}$ and update $\overline{\mathcal{C}}(v)\gets \overline{\mathcal{C}}(v)\cup \mathcal{W}_i$. Then, we update $\overline{\mathcal{J}}\gets \overline{\mathcal{J}}\setminus \cup_{u\in \mathcal{W}_i} \mathcal{I}(u)$ and $i\gets i+1$, and iterate until $\overline{\mathcal{J}}$ is empty. Similar to Lemma \ref{lem:tau}, it is possible to explicitly identify the right covering $\overline{\mathcal{C}}(\overleftarrow{\mathcal{I}}) = \overline{\mathcal{C}}(v)$. For any $i\in [m-1]$, we write $v_{1,j_1,\ldots,j_{i-1},>j_{i}}$ as shorthand for the nodes $v_{1,j_1,\ldots,j_{i-1},j_i+1},\ldots, v_{1,j_1,\ldots,j_{i-1},n_i}$.
\begin{lemma}
	Let $\overleftarrow{\mathcal{I}}$ be a right cumulative interval associated with a leaf node $v = v_{1,j_1,\ldots,j_{m-1}}$. If $\overleftarrow{\mathcal{I}} = [a,b)$, then $\overline{\mathcal{C}}(\overleftarrow{\mathcal{I}})$ is the empty set. Else, 
	\[
	\overline{\mathcal{C}}(\overleftarrow{\mathcal{I}}) = \{v_{1,>j_1},v_{1,j_1,>j_2},\ldots,v_{1,j_1,j_2,\ldots,>j_{\tau}}\},
	\]
	where $\tau = \tau(v)$.
\end{lemma}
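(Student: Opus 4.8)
The plan is to prove this exactly as one proves Lemma~\ref{lem:tau}, with the roles of ``left'' and ``right'' (and of ``$<$'' and ``$>$'') interchanged, by tracing the greedy, breadth-first covering procedure that defines $\overline{\mathcal{C}}(v)$ level by level. First I would dispose of the boundary case: when $v = v_{1,n_1,\ldots,n_{m-1}}$ (the only leaf for which $\tau(v)$ is undefined, so that the ``else'' clause cannot apply) we have $\overrightarrow{\mathcal{I}}(v) = [a,b)$ and hence $\overleftarrow{\mathcal{I}}(v) = \emptyset$; since every node of $\mathcal{T}_m$ carries a nonempty interval, no node's interval is ever contained in the (empty) residual set, the procedure halts immediately, and $\overline{\mathcal{C}}(v) = \emptyset$, matching the first assertion.

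For the main case, set $\tau := \tau(v)$ and let $w_\ell := v_{1,j_1,\ldots,j_\ell}$ be the level-$\ell$ ancestor of $v$ for $0 \le \ell \le \tau$ (so $w_0 = v_1$ and $\mathcal{I}(w_0) = [a,b)$), and write $R_\ell$ for the right endpoint of $\mathcal{I}(w_\ell)$. The first key step is a structural description of $\overleftarrow{\mathcal{I}}(v)$. Since $j_\ell = n_\ell$ for every $\ell > \tau$, the leaf $v$ is the rightmost leaf of the subtree rooted at $w_\tau$, so the right endpoint of $\mathcal{I}(v)$ equals $R_\tau$; thus $\overrightarrow{\mathcal{I}}(v) = [a,R_\tau)$ and $\overleftarrow{\mathcal{I}}(v) = [R_\tau,b)$. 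Telescoping along the chain $w_0, w_1, \ldots, w_\tau$ — using that $\mathcal{I}(w_{\ell-1})$ is partitioned into the $n_\ell$ equal child-intervals $\mathcal{I}(v_{1,j_1,\ldots,j_{\ell-1},k})$, $k \in [n_\ell]$, of which $w_\ell$ is the $j_\ell$-th — gives the disjoint decomposition
\[
\overleftarrow{\mathcal{I}}(v) = [R_\tau,b) = \bigsqcup_{\ell=1}^{\tau}[R_\ell,R_{\ell-1}) = \bigsqcup_{\ell=1}^{\tau}\ \bigcup_{k=j_\ell+1}^{n_\ell}\mathcal{I}\big(v_{1,j_1,\ldots,j_{\ell-1},k}\big),
\]
that is, $\overleftarrow{\mathcal{I}}(v)$ is exactly the union of the intervals of the nodes listed in the statement.

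The second step runs the covering procedure against this decomposition. At level $0$ nothing is picked, since $\mathcal{I}(v_1) = [a,b) \not\subseteq \overleftarrow{\mathcal{I}}(v)$. I would then show by induction on $\ell = 1,\ldots,\tau$ that on entering level $\ell$ the residual interval is $\overline{\mathcal{J}} = [R_\tau,R_{\ell-1})$, that $\mathcal{W}_\ell = v_{1,j_1,\ldots,j_{\ell-1},>j_\ell}$ (the children of $w_{\ell-1}$ with index exceeding $j_\ell$), and that the residual after deleting their intervals is $[R_\tau,R_\ell)$, which is empty exactly when $\ell = \tau$ (ending the procedure). Identifying $\mathcal{W}_\ell$ is the one delicate point: since $\overline{\mathcal{J}} = [R_\tau,R_{\ell-1}) \subseteq \mathcal{I}(w_{\ell-1})$, only children of $w_{\ell-1}$ can lie in $\overline{\mathcal{J}}$; a child with index $>j_\ell$ has interval inside $[R_\ell,R_{\ell-1}) \subseteq \overline{\mathcal{J}}$ (because $R_\ell \ge R_\tau$, $w_\tau$ being a descendant of $w_\ell$); a child with index $<j_\ell$ has interval lying strictly left of the left endpoint of $\mathcal{I}(w_\ell)$, hence left of $R_\tau$, so disjoint from $\overline{\mathcal{J}}$; and the $j_\ell$-th child $w_\ell$ itself is excluded since its left endpoint is strictly below $R_\tau$. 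Here is where the defining property $j_\tau \ne n_\tau$ is used: it guarantees $R_\tau < R_\ell$ for every $\ell < \tau$ (equivalently, $w_\tau$ is never the rightmost descendant of a proper ancestor $w_\ell$), so the residual interval genuinely shrinks and the procedure passes through all of levels $1,\ldots,\tau$ before emptying at level $\tau$. Collecting $\bigcup_{\ell=1}^{\tau}\mathcal{W}_\ell$ then yields the claimed form of $\overline{\mathcal{C}}(\overleftarrow{\mathcal{I}})$.

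I expect the only real obstacle to be careful bookkeeping — keeping track of the endpoints $R_\ell$, the strict versus non-strict containments, and the two boundary cases (level $0$, and the rightmost leaf) — rather than any conceptual difficulty, since the whole argument is a mirror image of the one behind Lemma~\ref{lem:tau}.
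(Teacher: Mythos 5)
Your proof is correct and is exactly the argument the paper intends: the paper states this lemma without proof, merely noting that it is ``similar to Lemma~\ref{lem:tau}'', and your level-by-level trace of the right-covering procedure, anchored on the decomposition $\overleftarrow{\mathcal{I}}(v) = \bigsqcup_{\ell=1}^{\tau}[R_\ell,R_{\ell-1})$ and the observation that $j_\tau \neq n_\tau$ forces $R_\ell > R_\tau$ for $\ell < \tau$, supplies precisely the missing mirror-image verification. You also read the degenerate case correctly — the hypothesis ``$\overleftarrow{\mathcal{I}} = [a,b)$'' is evidently a slip for the rightmost-leaf case $\overrightarrow{\mathcal{I}}(v) = [a,b)$, i.e.\ $\overleftarrow{\mathcal{I}}(v) = \emptyset$ — and your treatment of it is the intended one.
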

Finally, we define the length-$K$ vector $\overleftarrow{M}_{\chi} = (\overleftarrow{M}_{\chi, v}:\ \text{$v$ is a leaf node})$, such that
\begin{equation}
	\label{eq:mleft}
	\overleftarrow{M}_{\chi, v} = \sum_{v^\prime\in \overline{\mathcal{C}}(v)} \tilde{M}_{\chi(v^\prime)}.
\end{equation}
Observe that for any leaf node $v$, the random variables $\overrightarrow{M}_{\chi, v}$ and $(N- \overleftarrow{M}_{\chi, v})$ are two estimates of the same quantity. We then construct the estimate
\begin{equation}
	\label{eq:honav}
M^\text{ref}_{\chi, v}:= \frac{\overrightarrow{M}_{\chi, v}+N- \overleftarrow{M}_{\chi, v} }{2}
\end{equation}
as a refined estimate of the cumulative count at leaf node $v$. Finally, we set
\[
\mathbf{F}^\text{tree, ref}_{\beta,\widehat{\varepsilon}}(\mathcal{D}) = \frac{1}{N}\cdot M^\text{ref}_{\chi},
\]
where $M^\text{ref}_{\chi} = \left(M^\text{ref}_{\chi, v}:\ v\text{ is a leaf node of $\mathcal{T}_m$}\right)$. The procedure described in this section gives rise to Lemma \ref{lem:honaker}, in a manner entirely analogous to Theorem \ref{thm:e2tree}.

\begin{proof}[Proof of Lemma \ref{lem:honaker}]
	The proof is a direct consequence of Proposition \ref{prop:hon} and the fact that the averaging in \eqref{eq:honav} leads to a factor of half reduction in {the} overall $\ell_2$-error.
\end{proof}

\section{Proof of Lemma \ref{lem:intnoncommon}}
\label{app:intnoncommon}
In this section, we prove Lemma \ref{lem:intnoncommon}.

\begin{proof}
	Since $\overline{K} = 2^t$ is such that $t$ is prime, we must have that $\beta\in \{2,2^t\}$, where $\beta$ is any common branching factor, as in the statement of the lemma. 
	
	Now, since $t\geq 11$ is prime, we must have that $t=4s+1$ or $t=4s+3$, for some integer $s$.
	
	Consider the first case when $t=4s+1$, for some $s$. From the expression for the $\ell_2$-error in Theorem \ref{thm:e2tree}, we see that for fixed $\overline{K}$, 
	\[
	E_2\left(\mathbf{F}^\text{tree, eq}_{2\cdot \mathbf{1}}\right) = c\cdot t^3 = c\cdot (4s+1)^3> c\cdot (64s^3+48s^2),
	\]
	for some fixed $c>0$. Further, we have 
	\[
	E_2\left(\mathbf{F}^\text{tree, eq}_{\overline{K}}\right) = c\cdot (\overline{K}-1) > E_2\left(\mathbf{F}^\text{tree, eq}_{2\cdot \mathbf{1}}\right),
	\]
	for $t\geq 11$. Now, consider a level uniform tree-based mechanism $\mathbf{F}^\text{tree, eq}_{\overline{\mathbf{n}}}$ of height $\overline{m} = s$, with $\overline{n}_1 = \ldots = \overline{n}_{\overline{m}-1} = 16$ and $\overline{n}_{\overline{m}} = 32$. A direct computation then reveals that
	\begin{align*}
	E_2\left(\mathbf{F}^\text{tree, eq}_{\overline{\mathbf{n}}}\right) &= cs^2\cdot (15(s-1)+31)\\ &= c\cdot (15s^3+16s^2)\\
	&<\frac{c}{3}\cdot (64s^3+48s^2) = E_2\left(\mathbf{F}^\text{tree, eq}_{2\cdot \mathbf{1}}\right).
	\end{align*}
Consider next the case when $t = 4s+3$. Here, we have
\[
E_2\left(\mathbf{F}^\text{tree, eq}_{2\cdot \mathbf{1}}\right) = c_1\cdot (4s+3)^3>c_1\cdot (64s^3+144s^2),
\]
for some fixed $c_1>0$. Further, we again have that $E_2\left(\mathbf{F}^\text{tree, eq}_{\overline{K}}\right) > E_2\left(\mathbf{F}^\text{tree, eq}_{2\cdot \mathbf{1}}\right)$. Next, consider a level uniform tree-based mechanism $\mathbf{F}^\text{tree, eq}_{\tilde{\mathbf{n}}}$ of height $\overline{m} = s+1$, with $\tilde{n}_1 = 8$ and $\tilde{n}_2 = \ldots = \tilde{n}_{\overline{m}} = 16$. Then,
 \[
 E_2\left(\mathbf{F}^\text{tree, eq}_{\tilde{\mathbf{n}}}\right) = c_1(s+1)^2\cdot (15s+7)<c_1\cdot(16s^3+48s^2),
 \]
 implying that $E_2\left(\mathbf{F}^\text{tree, eq}_{\tilde{\mathbf{n}}}\right) <\frac13 \cdot E_2\left(\mathbf{F}^\text{tree, eq}_{2\cdot \mathbf{1}}\right)$.
\end{proof}
\end{document}